\theoremstyle{plain}
\newtheorem{theorem}{Theorem}
\newtheorem{corollary}{Corollary}
\newtheorem{lemma}{Lemma}
\newtheorem{proposition}{Proposition}
\theoremstyle{definition}
\newtheorem{definition}{Definition}
\theoremstyle{remark}
\journal{Mathematical Biosciences}
\begin{document}

\begin{frontmatter}

%% Title, authors and addresses

%% use the tnoteref command within \title for footnotes;
%% use the tnotetext command for the associated footnote;
%% use the fnref command within \author or \address for footnotes;
%% use the fntext command for the associated footnote;
%% use the corref command within \author for corresponding author footnotes;
%% use the cortext command for the associated footnote;
%% use the ead command for the email address,
%% and the form \ead[url] for the home page:
%%
%% \title{Title\tnoteref{label1}}
%% \tnotetext[label1]{}
%% \author{Name\corref{cor1}\fnref{label2}}
%% \ead{email address}
%% \ead[url]{home page}
%% \fntext[label2]{}
%% \cortext[cor1]{}
%% \address{Address\fnref{label3}}
%% \fntext[label3]{}

\title{Uniform generation of RNA pseudoknot structures with genus filtration}

%% use optional labels to link authors explicitly to addresses:
%% \author[label1,label2]{<author name>}
%% \address[label1]{<address>}
%% \address[label2]{<address>}

\author[dk]{Fenix W.D.\ Huang}
\ead{fenixprotoss@gmail.com}

\author[dk,kl]{Markus E.\ Nebel}
\ead{nebel@informatik.uni-kl.de}

\author[dk]{Christian M.\ Reidys\corref{cor1}}
\ead{duck@santafe.edu}

\cortext[cor1]{Corresponding author}
%\fntext[label1]{Department of Mathematics and Computer Science University of Southern Denmark Campusvej 55
%DK-5230 Odense M Denmark.}

%\fntext[label2]{Geb\"{a}ude 48, Raum 655, Postfach 3049, 67663 Kaiserslautern, Germany}
\address[dk]{Department of Mathematic and Computer science, University of
    Southern Denmark, Campusvej 55, DK-5230 Odense M, Denmark}
\address[kl]{Department of Computer Science, University of Kaiserslautern, Germany}

\begin{abstract}
In this paper we present a sampling framework for RNA structures of fixed topological
genus. We introduce a novel, linear time, uniform sampling algorithm for RNA structures
of fixed topological genus $g$, for arbitrary $g>0$.
Furthermore we develop a linear time sampling algorithm for RNA structures of
fixed topological genus $g$ that are weighted by a simplified, loop-based energy
functional. For this process the partition function of the energy functional has
to be computed once, which has $O(n^2)$ time complexity.
\end{abstract}

\begin{keyword}
RNA secondary structure, RNA pseudoknot structure, diagram, topological surface,
genus, partition function, sampling
\end{keyword}

\end{frontmatter}

\section{Introduction}

Pseudoknots have long been known as important structural elements in RNA
\cite{Westhof:92a}. These cross-serial interactions between RNA nucleotides
are functionally important in tRNAs, RNaseP \cite{Loria:96a}, telomerase
RNA \cite{Staple:05}, and ribosomal RNAs \cite{Konings:95a}.
Pseudoknots in plant virus RNAs mimic tRNA structures, and {\it in vitro}
selection experiments have produced pseudoknotted RNA families that bind
to the HIV-1 reverse transcriptase \cite{Tuerk:92}. Import general mechanisms,
such as ribosomal frame shifting, are dependent upon pseudoknots \cite{Chamorro:91a}.

Lyngs{\o} {\it et al.} \cite{Lynsoe:00} have shown that the prediction of general
RNA pseudoknot structures is NP-complete. Thus, in order to provide prediction tools of
feasible time complexity one frequently sticks to subtle subclasses of pseudoknots
suitable for the dynamic programming paradigm \cite{Rivas:99, Nebel:12}. Alternative approaches to
the prediction of RNA secondary structure (with or without pseudoknots) build on random
sampling of foldings compatible to a given sequence. Here both, the underlying probability
model and the efficiency of the sampling algorithm are crucial for being successful.

In this paper we propose a linear time uniform random sampler for pseudoknotted
RNA structures of given genus which might be considered a promising starting point
for the design of efficient solutions to the structure prediction problem.
Our approach is based on the observation that pseudoknotted RNAs
are in a natural way related to topological surfaces. In fact
pseudoknotted RNA structures can be viewed as drawings on orientable surfaces of
genus $g$, that is by means of the classical classification theorem either on the
sphere (secondary structures) or connected sums of tori (pseudoknotted structures).
Our approach is a natural evolution from Waterman {\it et al.} pioneering work
\cite{Waterman:79a,Nussinov:78,Kleitman:70} on secondary structures.

Secondary structures are coarse grained RNA contact structures, see Figure~\ref{F:structure} (A).
They can be represented as diagrams, i.e.~labeled
graphs over the vertex set $[n]=\{1, \dots, n\}$ with vertex degrees
$\le 3$, represented by drawing its vertices on a horizontal
line and its arcs $(i,j)$ ($i<j$), in the upper half-plane, see
Figure~\ref{F:structure}. We assume the vertices to be connected by the edges $\{i,i+1\}$, $1\le i<n$,
which are not considered arcs (but contribute to a nodes's degree).
Furthermore, vertices and arcs correspond to the nucleotides {\bf A}, {\bf G},
{\bf U} and {\bf C} and Watson-Crick base pairs ({\bf A-U}, {\bf G-C}) or wobble base pairs
({\bf U-G}), respectively.

\begin{figure}[ht]
\begin{center}
\includegraphics[width=0.85\columnwidth]{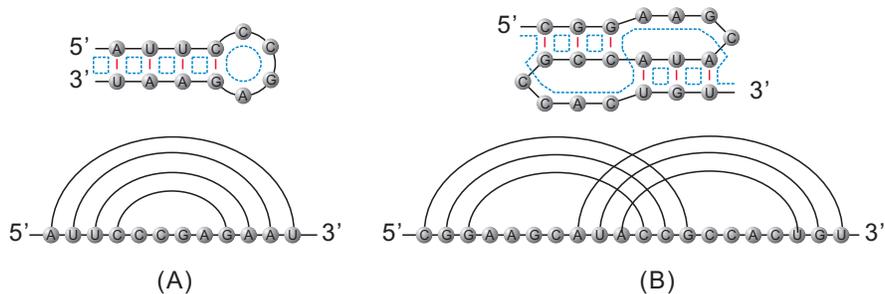}
\end{center}
\caption{\small A secondary structure (A) and a pseudoknot
structure (B) and their diagram representation.
}
\label{F:structure}
\end{figure}

Considering only the Watson-Crick and wobble base pair RNA structures, we set the restriction
that one vertex can only paired with at most another vertex. Let $i<r$, we call arcs
$(i,j)$ and $(r,s)$ crossing if $i<r<j<s$ holds. In this representation a
secondary structure is a diagram without crossing arcs. Otherwise, i.e.~diagrams
with crossings represent pseudoknot structures, see Figure~\ref{F:structure} (B).

In this paper, we present a framework for generating diagrams with crossings, filtered by
topological genus, with uniform probabilities.
The topological filtration of RNA structures has first been proposed by
Penner and Waterman in \cite{Waterman:93} and later, as an application
of the Matrix model \cite{Orland:02}, in \cite{Bon:08}.
The work here however is based on the combinatorial work of Chapuy \cite{Chapuy:11}.

In order to understand how topology enters the picture for RNA molecules we need
to pass from diagrams or contact-graphs to that of topological surfaces. Only the
associated surface carries the key invariants leading to a meaningful filtration
of RNA structures. The mental picture here is to ``thicken'' the edges into (untwisted)
bands and to expand each vertex to a disk as shown in Figure~\ref{F:fat}. This
inflation of edges leads to a fatgraph $\mathbb{D}$ \cite{Loebl:08,Penner:10}.

A fatgraph, sometimes also called also a ``map'', is a graph
equipped with a cyclic ordering of the incident half-edges at each vertex.
Thus, $\mathbb{D}$ refines its underlying graph $D$ insofar as it encodes
the ordering of the ribbons incident on its disks. In fact a fatgraph constitutes to a
cell-complex structure --combinatorial data in a sense-- that have a topological surface
as geometric realization \cite{Massey:69}.

Our sampling process consists of two steps: first we generate a diagram without crossing
arcs and second we lift the topological genus to some fixed $g$. The process has linear
time and is thereby very efficient.

The paper is organized as follows: we first introduce the topological filtration of diagrams.
Then we introduce a genus induction process and finally, we describe and analyze the sampling
processes.

%%%
%%%%%%%%%%%%%%%%%%%%%%%%%%%%%%%%%%%%%%%%%%%%%%%%%%%%%%%%%%%%%%%%%%%%%%%%%%%%%%%%%%%%%%%%%%%%%
%%%
\section{Some basic facts} \label{S:basic}
%%%
%%%%%%%%%%%%%%%%%%%%%%%%%%%%%%%%%%%%%%%%%%%%%%%%%%%%%%%%%%%%%%%%%%%%%%%%%%%%%%%%%%%%%%%%%%%%%
%%%

\subsection{Diagrams}

%RAINBOWS!!!
A diagram is a labeled graph over the vertex set $[n]=\{1, \dots, n\}$ in
which each vertex has degree $\le 3$, represented by drawing its vertices
in a horizontal line. The backbone of a diagram is the sequence of
consecutive integers $(1,\dots,n)$ together with the edges $\{\{i,i+1\}
\mid 1\le i\le n-1\}$. The arcs of a diagram, $(i,j)$, where $i<j$, are
drawn in the upper half-plane. We shall distinguish backbone edges
$\{i,i+1\}$ from arcs $(i,i+1)$, which we refer to as a $1$-arc.
Two arcs $(i,j)$, $(r,s)$, where $i<r$ are crossing if $i<r<j<s$ holds.
The arc $(1,n)$ is called rainbow, see Figure~\ref{F:diagram}.

%%%% A figure, showing fatgraph
\begin{figure}[ht]
\begin{center}
\includegraphics[width=0.7\columnwidth]{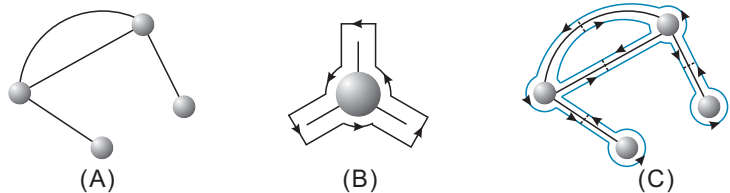}
\end{center}
\caption{\small From graphs to fatgraphs:
(A) A graph with $4$ vertexes and $4$ edges. (B) Inflation of a
vertex. (C) A fatgraph derived from (A) induces a topological surface.
}
\label{F:fat}
\end{figure}

%%%
%%%%%%%%%%%%%%%%%%%%%%%%%%%%%%%%%%%%%%%%%%%%%%%%%%%%%%%%%%%%%%%%%%%%%%%%%%%%%%%%%%%%%%%%%%%%%
%%%
\subsection{Fatgraphs and unicellular maps}
%%%
%%%%%%%%%%%%%%%%%%%%%%%%%%%%%%%%%%%%%%%%%%%%%%%%%%%%%%%%%%%%%%%%%%%%%%%%%%%%%%%%%%%%%%%%%%%%%
%%%
In this section, we discuss the filtration of diagrams by topological
genus. In order to extract topological properties of diagrams those need to be enriched
to fatgraphs. The latter are tantamount to a cell-complex
structures over topological surfaces.
Formally, we make this transition \cite{Reidys:top1} by ``thickening''
the edges of the diagram into (untwisted) bands or ribbons. Furthermore
each vertex is inflated into a disc as shown in Figure~\ref{F:fat} (B).
This inflation of edges and vertices means to replace a set of incident edges
by a sequence of half-edges. This constitutes the fatgraph $\mathbb{D}$
\cite{Loebl:08,Penner:10}.

\begin{figure}[ht]
\begin{center}
\includegraphics[width=0.85\columnwidth]{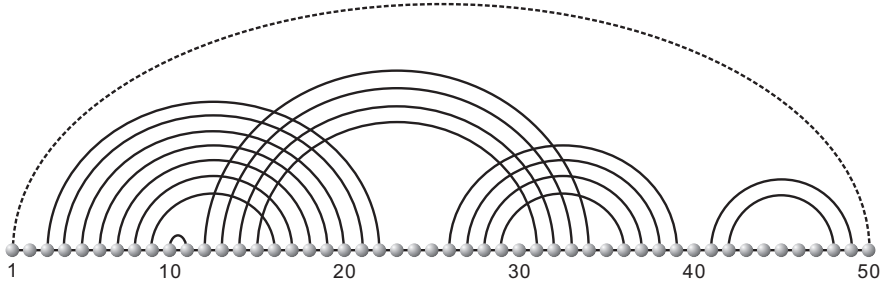}
\end{center}
\caption{\small A diagram over $50$ vertices. The arc $(10,11)$ is
a $1$-arc. The arcs $(3,22)$ and $(12,34)$ are crossing.
The dashed arc $(1,50)$ is the rainbow.
}
\label{F:diagram}
\end{figure}

A fatgraph is thus a graph enriched by a cyclic ordering of the incident
half-edges at each vertex and consists of the following data: a set of
half-edges, $H$, cycles of half-edges as vertices and pairs of half-edges
as edges. Consequently, we have the following definition:

%%%
%%%%%%%%%%%%%%%%%%%%%%%%%%%%%%%%%%%%%%%%%%%%%%%%%%%%%%%%%%%%%%%%%%%%%%%%%%%%%%%
%%%
\begin{definition}
A fatgraph is a triple $(H, \sigma, \alpha)$, where $\sigma$ is
the vertex-permutation and $\alpha$ a fixed-point free involution.
\end{definition}
%%%
%%%%%%%%%%%%%%%%%%%%%%%%%%%%%%%%%%%%%%%%%%%%%%%%%%%%%%%%%%%%%%%%%%%%%%%%%%%%%%%
%%%

In the following we will deal with orientable fatgraphs\footnote{Here ribbons may also be allowed to twist giving
rise to possibly non-orientable surfaces \cite{Massey:69}.}.
Each ribbon has two boundaries. The first one in counterclockwise
order shall be labeled by an arrowhead, see Figure~\ref{F:fat} (C).

A fatgraph $\mathbb{D}$ exhibits a phenomenon, not present in its
underlying graph $D$. Namely, one can follow the (directed) sides of the
ribbons rotating counterclockwise around the vertices. This gives rise to
$\mathbb{D}$-cycles or boundary components, constructed by following these
directed boundaries from disc to disc. Algebraically, this amounts to form
the permutation $\gamma=\alpha \circ \sigma$.

In the following we consider only diagrams with rainbow. As we shall see,
the rainbow arc provides a canonical first boundary component, which travels
on top of the rainbow arc and around the backbone of the diagram, see
Figure~\ref{F:bc}.

%%%% A figure, showing boundary compoennt
\begin{figure}[ht]
\begin{center}
\includegraphics[width=0.8\columnwidth]{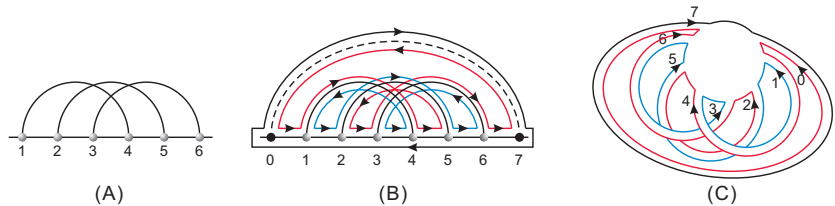}
\end{center}
\caption{\small (A) A diagram.  (B) the fattening of (A) augmented by the
rainbow (0, 7). Here $\sigma=(0,1,2,3,4,5,6,7)$, $\alpha=(0,7)(1,4)(2,5)(3,6)$.
Accordingly $\gamma= \alpha \circ \sigma= (0, 4, 2, 6)(1, 5, 3)(7)$ has two cycles.
(C) Collapsing the backbone into a vertex.
}
\label{F:bc}
\end{figure}

%%%% topological genus
A fatgraph, $\mathbb{D}$, can be viewed as a ``drawing'' on a
certain topological surface.
$\mathbb{D}$ is a $2$-dimensional cell-complex over its geometric
realization, i.e.~a surface without boundary, $X_{\mathbb{D}}$, realized
by identifying all pairs of edges \cite{Massey:69}.
Key invariants of the latter, like Euler characteristic \cite{Massey:69}
\begin{eqnarray}\label{E:euler}
\chi(X_\mathbb{D}) & = & v - e + r,\\
\label{E:genus}
g(X_\mathbb{D}) & = & 1-\frac{1}{2}\chi(X_\mathbb{D}),
\end{eqnarray}
where $v,e,r$ denotes the number of discs, ribbons and boundary components
in $\mathbb{D}$ \cite{Massey:69} are defined combinatorially. However,
equivalence of simplicial and singular homology \cite{Hatcher:02} implies that
these combinatorial invariants are in fact invariants of $X_{\mathbb{D}}$ and thus
topological. This means the surface $X_{\mathbb{D}}$ provides a topological
filtration of fatgraphs.

Since, adding a rainbow or collapsing the backbone of a diagram
does not change the Euler characteristic, the relation between
genus and number of boundary components is solely determined by the number
of arcs in the upper half-plane:
\begin{equation}\label{E:ee}
2-2g-r = 1-n,
\end{equation}
where $n$ is number of arcs and $r$ the number of boundary
components. The latter can be computed easily and allows us therefore
to obtain the genus of the diagram.

\begin{definition}
A unicellular map $\mathfrak{m}$ of size $n$ is a fatgraph
$\mathfrak{m}(n)=(H,\alpha,\sigma)$ in which the
permutation $\alpha\circ\sigma$ is a cycle of length $2n$.
\end{definition}

While unicellular maps are simply particular fatgraphs, they naturally
arise in the context of diagrams, by two observations. First in the
diagram one may collapse the backbone into a single
vertex. Second the mapping
$$
\pi \colon (H,\sigma, \alpha) \ \mapsto \ (H, \alpha\circ \sigma,\alpha),
$$
is evidently a bijection between fatgraphs having one vertex and
unicellular maps, see Figure~\ref{F:dual}.
The mapping is called the {\it Poincar\'{e} dual} and interchanges
boundary components by vertices, preserving topological genus.
In the following, we use $\pi$ to denote the {Poincar\'{e} dual}.

%%%% A figure, showing boundary compoennt
\begin{figure}[ht]
\begin{center}
\includegraphics[width=0.8\columnwidth]{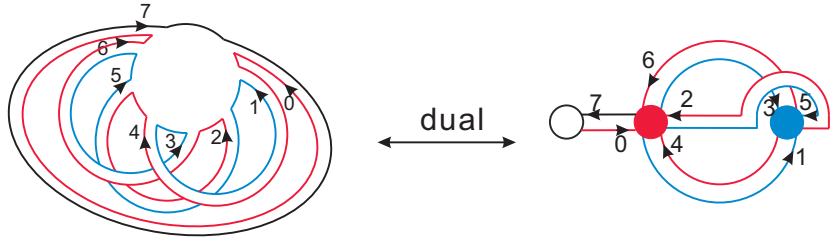}
\end{center}
\caption{\small The Poincar\'{e} dual: we map a fatgraph with $1$
vertex and $3$ boundary components into a fatgraph with $3$ vertexes
and $1$ boundary component.
}
\label{F:dual}
\end{figure}

Given a unicellular map the permutation $\sigma$ and $\gamma$ induces two
linear orders of half-edges
$$
r <_{\gamma} <\gamma(r) <_{\gamma} \dots <_{\gamma} \gamma^{2n-1}(r), \quad
r <_{\sigma} <\sigma(r) <_{\sigma} \dots <_{\sigma} \sigma^{k}(r).
$$
Let $a_1$ and $a_2$ be two distinct half-edges in $\mathfrak{m}$. Then
$a_1<_{\gamma} a_2$ expresses the fact that $a_1$ appears before $a_2$ in
the boundary component $\gamma=\alpha\circ \sigma$.
Suppose two half-edges $a_1$ and $a_2$ belong to the same vertex $v$. Note
that $v$ is effectively a cycle which we assume to originate with the first
half-edge along which one enters $v$ traveling $\gamma$. Then
$a_1<_\sigma a_2$ expresses the fact that $a_1$ appears (counterclockwise)
before $a_2$.

The Poincare-dual maps the rainbow into a distinguished vertex of degree one
and provides thereby a natural origin for the cycle $\gamma$. We call this
vertex the {\it plant}, see Figure~\ref{F:dual}.
Given a unicellular map we call a half-edge the minimum half-edge of a
vertex $v$ if it is the first half-edge via which $\gamma$ visits $v$.

%%%
%%%%%%%%%%%%%%%%%%%%%%%%%%%%%%%%%%%%%%%%%%%%%%%%%%%%%%%%%%%%%%%%%%%%%%%%%%
%%%
\subsection{Genus induction}
%%%
%%%%%%%%%%%%%%%%%%%%%%%%%%%%%%%%%%%%%%%%%%%%%%%%%%%%%%%%%%%%%%%%%%%%%%%%%%%
%%%

In this section we present a construction of \cite{Chapuy:11}, which plays a
key role for our main result. It consists of two processes: a slicing-map
$\Xi$ and a gluing-map $\Lambda$, which, when restricted to the proper
classes, are inverse to each other.

The slicing process splits a vertex into $(2g+1)$ vertices and thereby
reduces the genus of the map by $g$. Gluing is effectively inverse to
slicing, namely: gluing any $(2g+1)$ vertices in a unicellular map increases
the genus of the map by $g$. Slicing and gluing preserve unicellularity.

%%%
%%%%%%%%%%%%%%%%%%%%%%%%%%%%%%%%%%%%%%%%%%%%%%%%%%%%%%%%%%%%%%%%%%%%%%%%%%%%%%%%%%%%%%%%%%%%%
%%%
\begin{definition}
A half-edge $h$ is an {\it up-step} if $h<_{\gamma} \sigma(h)$, and a
{\it down-step} if $\sigma(h) \le_{\gamma} h$. $h$ is called a {\it trisection}
if $h$ is a down-step and $\sigma(h)$ is not the minimum half-edge of its
respective vertex.
\end{definition}
%%%
%%%%%%%%%%%%%%%%%%%%%%%%%%%%%%%%%%%%%%%%%%%%%%%%%%%%%%%%%%%%%%%%%%%%%%%%%%%%%%%%%%%%%%%%%%%%%
%%%

The number of trisections in a unicellular map of genus $g$ is given by the following
lemma:
%%%% trisection lemma

%%%
%%%%%%%%%%%%%%%%%%%%%%%%%%%%%%%%%%%%%%%%%%%%%%%%%%%%%%%%%%%%%%%%%%%%%%%%%%%%%%%%%%%%%%%%%%%%%
%%%
\begin{lemma} \label{L:trisection}
\cite{Chapuy:11} Let $\mathfrak{m}$ be a unicellular map of
genus $g$. Then $\mathfrak{m}$ has exactly $2g$ trisections.
\end{lemma}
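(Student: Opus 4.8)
The plan is to count trisections by relating them to up-steps and down-steps along the boundary cycle $\gamma$, and then to use the Euler-characteristic relation \eqref{E:ee} to pin down the total. First I would set up the basic bookkeeping: traversing the single boundary component $\gamma$ of length $2n$, every half-edge $h$ is classified as either an up-step (when $h <_\gamma \sigma(h)$) or a down-step (when $\sigma(h) \le_\gamma h$). Since each of the $2n$ half-edges is exactly one or the other, I would first show that the numbers of up-steps and down-steps are equal. The key observation is that $\sigma$ acts on each vertex as a cyclic permutation, so restricted to a single vertex $v$ of degree $k$, following $\sigma$ around $v$ produces a cyclic sequence of half-edges; reading off their positions in the linear $<_\gamma$ order, exactly one of the $k$ consecutive $\sigma$-steps must ``wrap around'' (go from the largest back to the smallest position), and that one step is a down-step while the rest are up-steps. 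Summing over all vertices gives a clean count.

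Next I would convert the count of down-steps into a count of trisections. By definition a down-step $h$ fails to be a trisection precisely when $\sigma(h)$ \emph{is} the minimum half-edge of its vertex, i.e.\ the first half-edge through which $\gamma$ enters that vertex. The natural claim is that each vertex contributes exactly one such ``excluded'' down-step: the down-step that wraps around within the vertex-cycle is the one whose $\sigma$-image is the minimum half-edge, and conversely. Here the \emph{plant} plays a special role --- it is the distinguished degree-one vertex arising from the rainbow under the Poincar\'e dual, providing the canonical origin for $\gamma$ --- so I would treat its contribution carefully, since a degree-one vertex behaves degenerately with respect to the wrap-around argument. Subtracting one excluded down-step per (non-plant) vertex from the total number of down-steps then yields the number of trisections.

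To finish, I would assemble the arithmetic. If $V$ denotes the number of vertices of $\mathfrak{m}$, the count above gives the number of trisections as (number of down-steps) minus (number of vertices contributing an excluded down-step). Since down-steps number exactly $V$ (one per vertex from the wrap-around argument above), I expect the trisection count to reduce to an expression purely in $V$ and $n$. I would then invoke the Euler-characteristic relation: a unicellular map is the Poincar\'e dual of a one-vertex fatgraph with $n$ arcs, so by \eqref{E:ee} we have $2 - 2g - r = 1 - n$ where here the single boundary cycle and the vertex count are interchanged by $\pi$. Translating this into the relation between $V$, $n$, and $g$ should give $V = n + 1 - 2g$ (or the appropriate normalization once the plant is accounted for), and substituting into the trisection count collapses everything to exactly $2g$.

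The hard part will be the careful handling of the minimum half-edges and the off-by-one bookkeeping --- in particular, making the wrap-around argument precise enough to establish the bijection between ``excluded'' down-steps and vertices, and correctly isolating the plant so that the final tally is exactly $2g$ rather than $2g$ shifted by a constant. Getting the linear orders $<_\gamma$ and $<_\sigma$ to interact correctly at the vertex where $\gamma$ first originates is the delicate point on which the whole count hinges.
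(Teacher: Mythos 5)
The paper offers no proof of this lemma---it is quoted from Chapuy \cite{Chapuy:11}---so your attempt can only be measured against the known argument, whose skeleton (count down-steps, subtract one ``excluded'' down-step per vertex, finish with Euler's formula) you have correctly identified. However, the central counting claim in your second step is false, and it is fatal. You assert that at a vertex of degree $k$ exactly one of the $k$ cyclic $\sigma$-steps is a down-step (the ``wrap-around'' step) and the other $k-1$ are up-steps. A cyclic sequence of $k$ distinct positions in the $<_\gamma$ order can have anywhere from $1$ to $k-1$ cyclic descents, not exactly one; indeed, if every vertex had exactly one down-step, then since the unique excluded down-step at each vertex is $\sigma^{-1}(\text{min})$, every down-step would be excluded and the number of trisections would be identically $0$ for every map, contradicting the lemma for $g>0$. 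The existence of vertices with several down-steps is precisely the phenomenon the trisection counts. Your preliminary claim that up-steps and down-steps are equinumerous is also false (and inconsistent with ``one down-step per vertex'' unless $V=n$): the correct counts are $n-1$ up-steps and $n+1$ down-steps.

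The missing idea is that the total number of down-steps must be counted \emph{edge by edge}, not vertex by vertex, using the relation $\sigma=\alpha\circ\gamma$. Writing the half-edges as $1,\dots,2n$ in the $\gamma$-order, $h$ is an up-step iff $h<\alpha(\gamma(h))$, which for $h<2n$ holds iff $h+1$ is the first-visited half-edge of its edge (and $h=2n$ is always a down-step). Exactly $n$ half-edges are first-visited in their edge, one of which is the root $1$, so there are $n-1$ up-steps and $n+1$ down-steps. Your third step then goes through: each vertex $v$ contributes exactly one non-trisection down-step, namely $\sigma^{-1}(m_v)$ where $m_v$ is the minimum half-edge of $v$ (this also handles the plant, whose unique half-edge is a fixed point of $\sigma$ and hence an excluded down-step under the paper's $\le_\gamma$ convention, so no special treatment is needed). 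Thus the number of trisections is $(n+1)-V$, and Euler's formula for a one-faced map, $V-n+1=2-2g$, gives $V=n+1-2g$ and hence exactly $2g$ trisections.
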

%%%
%%%%%%%%%%%%%%%%%%%%%%%%%%%%%%%%%%%%%%%%%%%%%%%%%%%%%%%%%%%%%%%%%%%%%%%%%%%%%%%%%%%%%%%%%%%%%
%%%

Given a unicellular map $\overline{\mathfrak{m}}$ and a vertex $\overline{v}$ together with
a trisection $\tau$ contained in $\overline{v}$. Let $a_1$ be the minimum half-edge of
$\overline{v}$. Then set $a_3=\overline{\sigma}(\tau)$ and $a_2$ to be the smallest
half-edge between $a_1$ and $a_3$ (with respect to the order $<_{\overline{\sigma}}$)
such that $a_2>_{\overline{\gamma}}$.

Since $\tau$ is a trisection such an $a_2$ exists. Then we refer to the replacement of
$$
\overline{v}=(a_1, h_2^1,\ldots, h_2^{m_2}, a_2, h_3^1,\ldots,h_3^{m_3},a_3,h_1^{1},\ldots,h_1^{m_1})
$$
by the three vertices where $v_i=(a_i,h_i,^1,\ldots, h_i^{m_i})$, $i=1,2,3$, see
Figure~\ref{F:glue_slice}, as slicing. Slicing produces the unicellular fatgraph
$\overline{\mathfrak{m}}=(H,\overline{\sigma}, \alpha)$.

\begin{figure}[ht]
\begin{center}
\includegraphics[width=0.8\columnwidth]{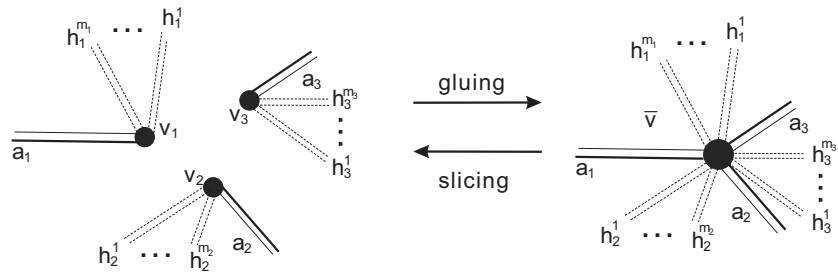}
\end{center}
\caption{\small Illustration of gluing and slicing in a unicellular map.
}
\label{F:glue_slice}
\end{figure}

Conversely, let $\mathfrak{m}$ be a unicellular map and let $a_1$, $a_2$ and $a_3$ be three
half-edges belonging to three distinct vertices, $v_i=(a_i,h_i,^1,\ldots, h_i^{m_i})$ for some
$m_i\ge 0$ and $i=1,2,3$. Furthermore suppose $a_1<_{\gamma}a_2<_{\gamma}a_3$.

Then, replacing the cycles $v_1$, $v_2$ and $v_3$ by the cycle
$$
\overline{v}=(a_1, h_2^1,\ldots, h_2^{m_2}, a_2, h_3^1,\ldots,h_3^{m_3},a_3,h_1^{1},\ldots,h_1^{m_1}),
$$
is referred to as gluing. Gluing produces the unicellular fatgraph $\overline{\mathfrak{m}}=
(H,\overline{\sigma}, \alpha)$, see Figure~\ref{F:glue_slice}, in which the half-edge
$\overline{\sigma}^{-1}(a_3)$ is, by construction, a trisection.

%%%
%%%%%%%%%%%%%%%%%%%%%%%%%%%%%%%%%%%%%%%%%%%%%%%%%%%%%%%%%%%%%%%%%%%%%%%%%%%%%%%%%%%%%%%%%%%%%
%%%
\begin{lemma} \cite{Chapuy:11}
Slicing maps a unicellular map together with a trisection into a unicellular map together with
three labeled vertices. Gluing maps a unicellular map together with three labeled vertices
into a unicellular maps with a trisection.
\end{lemma}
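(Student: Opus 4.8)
The plan is to prove that slicing and gluing are mutually inverse bijections between the two decorated classes, by verifying that each direction lands in the claimed target class and that the two constructions undo one another. I would organize the argument around the explicit local rewriting of the vertex-cycles given before the statement, tracking how the permutations $\sigma$, $\alpha$, and $\gamma=\alpha\circ\sigma$ change.

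First I would treat the gluing direction. Starting from a unicellular map $\mathfrak{m}$ with three labeled vertices $v_1,v_2,v_3$ satisfying $a_1<_\gamma a_2<_\gamma a_3$, I would form $\overline{v}$ as in the displayed cycle and verify that the resulting $\overline{\mathfrak{m}}=(H,\overline\sigma,\alpha)$ is again unicellular, i.e.\ that $\overline\gamma=\alpha\circ\overline\sigma$ is a single cycle of length $2n$. The key computation is to see how interleaving the three cyclic blocks in the prescribed order merges the boundary components; since $\alpha$ is unchanged, one only needs to check that rerouting $\gamma$ at the three splice points $a_1,a_2,a_3$ reconnects the cycle structure into one cycle. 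Then I would confirm that $\overline\sigma^{-1}(a_3)$ is a down-step whose image under $\overline\sigma$ is not the minimum half-edge of $\overline v$, i.e.\ it is a genuine trisection as asserted, using the hypothesis $a_1<_\gamma a_2<_\gamma a_3$ and the definition of the plant/minimum half-edge to locate $a_1$ as the minimum of $\overline v$.

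Next I would treat the slicing direction symmetrically: given $\overline{\mathfrak{m}}$ and a trisection $\tau$ in a vertex $\overline v$, I would use the recipe $a_3=\overline\sigma(\tau)$, $a_1=$ the minimum half-edge of $\overline v$, and $a_2=$ the smallest half-edge (in $<_{\overline\sigma}$) strictly between $a_1$ and $a_3$ with $a_2>_{\overline\gamma}\tau$, and check that the existence of $a_2$ follows from $\tau$ being a trisection (this is exactly why the trisection condition is imposed). I would then split $\overline v$ into $v_1,v_2,v_3$ and verify that the result is unicellular and that the three new vertices automatically satisfy $a_1<_\gamma a_2<_\gamma a_3$, so the output lies in the class on which gluing is defined.

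Finally I would close the loop by composing the two maps in both orders and checking they return the identity on the decorated data: gluing the three vertices produced by slicing reconstructs $\overline v$ with the same trisection $\tau=\overline\sigma^{-1}(a_3)$, and slicing the glued map recovers the three labeled vertices in their original cyclic arrangement. The effect on genus then follows from the Euler-characteristic bookkeeping: slicing raises the vertex count by $2$ while fixing the edge count and the single boundary component, so by \eqref{E:euler}--\eqref{E:genus} the genus drops by one, matching the genus-induction description. The main obstacle I anticipate is the bijection step — specifically, proving that the middle half-edge $a_2$ selected by slicing is \emph{forced} and coincides with the $a_2$ one started from in the gluing construction; this requires a careful case analysis of the relative $<_\gamma$-positions of the half-edges inside the three blocks, and it is where the precise definition of trisection (via the minimum-half-edge condition) does the essential work.
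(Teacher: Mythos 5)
The paper offers no proof of this lemma --- it is imported verbatim from Chapuy's work --- so there is nothing to compare against line by line; your outline does follow the standard direct verification via the permutations $(\sigma,\alpha,\gamma)$. Judged on its own, however, the proposal is a plan rather than a proof: the two load-bearing claims of the lemma --- that $\overline{\gamma}=\alpha\circ\overline{\sigma}$ is still a single $2n$-cycle after splicing the vertex cycles, and that $\overline{\sigma}^{-1}(a_3)$ is a down-step whose $\overline{\sigma}$-image is not the minimum half-edge of $\overline{v}$ --- are announced (``I would verify'', ``I would confirm'') but never carried out. That computation, which traces how the face tour is cut at $a_1,a_2,a_3$ and reassembled and uses $a_1<_{\gamma}a_2<_{\gamma}a_3$ to rule out the face splitting into three cycles, \emph{is} the lemma; without it nothing has been proved. (A small related slip: you speak of gluing ``merging the boundary components'', but there is only one boundary component before gluing --- the map is unicellular --- and the danger is that the single face splits, not that several faces need merging.)

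There is also a substantive overreach in your framing. You propose to establish the lemma by showing slicing and gluing are ``mutually inverse bijections between the two decorated classes'', but as you have set it up this is false: gluing splices the three vertices at their \emph{minimum} half-edges $a_i$, whereas after slicing a type-II trisection the half-edge $a_3=\overline{\sigma}(\tau)$ is \emph{not} the minimum of its new vertex $v_3$ (the paper notes exactly this when distinguishing type I from type II). So slicing followed by gluing does not return the original pair in general, and the ``careful case analysis'' you defer to the end would run into this obstruction. The lemma itself asserts only that each operation lands in the stated target class; the inverse statements require the type I/II distinction and the maps $\Phi$, $\Psi$, $\Lambda$, $\Xi$ of the subsequent Proposition and Theorem~\ref{T:bij}, and should not be folded into this proof. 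Finally, note that your reconstruction of the truncated condition defining $a_2$ (you wrote $a_2>_{\overline{\gamma}}\tau$) must be checked against the requirement that slicing invert $\Phi$; since $\tau$ is a down-step, $a_3=\overline{\sigma}(\tau)\le_{\overline{\gamma}}\tau$, so the conditions $a_2>_{\overline{\gamma}}\tau$ and $a_2>_{\overline{\gamma}}a_3$ are not equivalent, and choosing the wrong one breaks the bijection.
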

%%%
%%%%%%%%%%%%%%%%%%%%%%%%%%%%%%%%%%%%%%%%%%%%%%%%%%%%%%%%%%%%%%%%%%%%%%%%%%%%%%%%%%%%%%%%%%%%%
%%%

Suppose we slice $(\overline{\mathfrak{m}},\tau)$ into $(\mathfrak{m}, v_1,v_2,v_3)$,
where in $\overline{\mathfrak{m}}$ holds $a_1<_{\overline{\gamma}} a_3 <_{\overline{\gamma}} a_2$.
Then we observe that in $\mathfrak{m}$ $a_1$ remains minimum in its new vertex and so
does $a_2$, because $a_2$ is by definition the minimum half-edge where
$a_3 <_{\overline{\gamma}} a_2$.
However, $a_3$ becomes either the minimum half-edge, or remains a half-edge following a
trisection. This gives rise to {\it two} types of trisections:

%%%
%%%%%%%%%%%%%%%%%%%%%%%%%%%%%%%%%%%%%%%%%%%%%%%%%%%%%%%%%%%%%%%%%%%%%%%%%%%%%%%%%%%%%%%%%%%%%
%%%
\begin{definition}
Let $\overline{\mathfrak{m}}$ be a unicellular map and $\overline{v}$ a vertex
containing a trisection $\tau$.
Slicing $(\overline{\mathfrak{m}},\tau)$ we obtain $(\mathfrak{m},v_1,v_2,v_3)$.
If the minimum half-edge of $v_3$, denoted by $a_3$ is the half-edge
$\overline{\sigma}(\tau)$ in $\overline{\mathfrak{m}}$, we call the trisection
$\tau$ to be of {\it type I} and {\it type II}, otherwise.
\end{definition}
%%%
%%%%%%%%%%%%%%%%%%%%%%%%%%%%%%%%%%%%%%%%%%%%%%%%%%%%%%%%%%%%%%%%%%%%%%%%%%%%%%%%%%%%%%%%%%%%%
%%%
%%%
%%%%%%%%%%%%%%%%%%%%%%%%%%%%%%%%%%%%%%%%%%%%%%%%%%%%%%%%%%%%%%%%%%%%%%%%%%%%%%%%%%%%%%%%%%%%%
%%%
\begin{proposition}
Let $\mathfrak{m}_g$ denote a unicellular map of genus $g$ having $n$ edges.
Let furthermore $\tau^I$ denote a trisection of type I and $\tau^{II}$ denote a
trisection of type II. Then we have the mappings $\Phi$ and $\Psi$:
$$
\Phi(\mathfrak{m}_g ,v_1,v_2,v_3)=(\mathfrak{m}_{g+1}, \tau^{I}), \quad
\Psi(\mathfrak{m}_g ,v_1,v_2,\tau)=(\mathfrak{m}_{g+1}, \tau^{II})
$$
are bijections, where $v_1$, $v_2$ and $v_3$ denote three distinct vertices in
$\mathfrak{m}_g$ and $\mathfrak{m}_{g+1}$ is a unicellular map of genus $g+1$
having $n$ edges.
\end{proposition}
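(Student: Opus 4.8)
The plan is to recognize both $\Phi$ and $\Psi$ as instances of Chapuy's gluing map, and to exhibit the slicing map, restricted to the two types of trisection, as their respective inverses; the genuinely new content beyond the preceding lemma is to show that the dichotomy vertex-minimum versus trisection-successor for the third gluing half-edge corresponds exactly to the type~I / type~II dichotomy of the resulting trisection. First I would record the Euler-characteristic bookkeeping: a genus-$h$ unicellular map with $n$ edges has $n+1-2h$ vertices, so collapsing three vertices into one lowers the vertex count by two and, by the gluing lemma, raises the genus by one. This fixes the codomain genus at $g+1$ and shows the vertex numbers on both sides are those forced by unicellularity, a prerequisite for the maps to be bijections. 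It then suffices to check, for each type separately, that gluing followed by slicing and slicing followed by gluing are the identity.

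For $\Phi$ I would take three distinct vertices, label them $v_1,v_2,v_3$ by the $<_\gamma$-order of their minimum half-edges $a_1<_\gamma a_2<_\gamma a_3$, and glue along $(a_1,a_2,a_3)$. By the gluing lemma the half-edge $\overline\sigma^{-1}(a_3)$ is a trisection $\tau$ of $\mathfrak{m}_{g+1}$, and the point to verify is that it is of \emph{type~I}: since each $a_i$ is the minimum of its own vertex and $a_1$ becomes the minimum of the merged vertex, slicing $(\mathfrak{m}_{g+1},\tau)$ reproduces the three blocks with $a_1,a_2,a_3$ again as minima, so in particular $a_3=\overline\sigma(\tau)$ is the minimum of the third sliced vertex, which is the defining property of type~I. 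Conversely a type~I trisection slices, by definition, into three genuine vertices whose minima are exactly $a_1,a_2,a_3$; this is the inverse assignment, so $\Phi$ is a bijection onto the type~I trisections.

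For $\Psi$ the only change is the location of the cut in the third block. Given $(\mathfrak{m}_g,v_1,v_2,\tau)$ with $\tau$ a trisection lying in a vertex $v_3\notin\{v_1,v_2\}$, I would keep $a_1,a_2$ as the minima of $v_1,v_2$ but take $a_3=\sigma(\tau)$, which, since $\tau$ is a trisection, is \emph{not} the minimum of $v_3$. Gluing $v_1,v_2,v_3$ along $(a_1,a_2,a_3)$ again yields a trisection $\overline\sigma^{-1}(a_3)$; a short computation shows this is $\tau$ itself, because the third block, read cyclically from $a_3$, ends precisely at $\sigma^{-1}(a_3)=\tau$. Slicing back then returns $a_3$ as a non-minimum half-edge immediately following the trisection $\tau$, which is exactly type~II, and since $v_3$ together with the interior cut is recovered from $\tau$ alone, the triple $(v_1,v_2,\tau)$ stores precisely the data produced by slicing a type~II trisection. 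Hence $\Psi$ and type~II slicing are mutually inverse. Finally, as the Definition partitions the trisections of a genus-$(g+1)$ map into types~I and~II, the images of $\Phi$ and $\Psi$ are disjoint and together exhaust all $2(g+1)$ trisections, so both maps are bijections.

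The step I expect to be the main obstacle is the control of the boundary permutation $\gamma=\alpha\circ\sigma$ as it transforms under gluing and slicing, since every assertion above --- whether $a_1,a_2$ persist as minima, whether $a_3$ is a minimum or trails a trisection, and that the gluing ordering $a_1<_\gamma a_2<_\gamma a_3$ is the one consistently forced --- is read off from $<_\gamma$, and this order is not preserved between genus $g$ and genus $g+1$. The heart of the matter is to prove that gluing preserves the minimality of $a_1$ and $a_2$ while the status of $a_3$ is governed exactly by the type~I/type~II alternative; once this tracking is established, the remaining checks (distinctness of $v_1,v_2,v_3$ throughout, and that the labels are forced by the $<_\gamma$-order) are routine.
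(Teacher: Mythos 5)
The paper itself offers no proof of this proposition --- it is quoted, together with the slicing/gluing machinery, from Chapuy's construction \cite{Chapuy:11} --- so there is no internal argument to compare against. Your strategy does match the intended one: $\Phi$ is the gluing map applied at the three $<_{\gamma}$-minimal half-edges, $\Psi$ is the gluing map applied at the minima of $v_1,v_2$ and at $\sigma(\tau)$, the candidate inverses are the slicing map restricted to type~I and type~II trisections, the Euler-characteristic bookkeeping ($n+1-2h$ vertices, so merging three vertices raises the genus by one) is correct, and so is the remark that the two types partition the $2(g+1)$ trisections.

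There is, however, a genuine gap, and you name it yourself without closing it: every substantive assertion in your argument is read off from the tour order $<_{\overline{\gamma}}$ of the \emph{glued} map, and you never establish how $\overline{\gamma}=\alpha\circ\overline{\sigma}$ relates to $\gamma$ when $\sigma$ is rewired at three half-edges. Without that tracking you cannot conclude (i) that $a_1$ is the $<_{\overline{\gamma}}$-minimum of the merged vertex and that $\overline{\sigma}^{-1}(a_3)$ is a down-step whose $\overline{\sigma}$-image is not that minimum, i.e.\ that a trisection is produced at all; (ii) that the slicing rule's canonical choice of the middle cut point --- the $<_{\overline{\sigma}}$-smallest half-edge between $a_1$ and $a_3$ with $a_2>_{\overline{\gamma}}a_3$ --- returns exactly the $a_2$ you glued at, which is what upgrades slicing from a left candidate to a two-sided inverse; and (iii) for $\Psi$, that the ordering $a_1<_{\gamma}a_2<_{\gamma}\sigma(\tau)$ demanded by the definition of gluing either holds automatically or must be imposed on the domain (it is not automatic for arbitrary $v_1,v_2$, and the choice changes the count of admissible tuples and hence whether $\Psi$ can be a bijection onto all type~II trisections). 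Item (ii) is the real content of the proposition: it is precisely where the minimality of $a_1,a_2,a_3$ (for $\Phi$) versus the trisection property of $\sigma^{-1}(a_3)$ (for $\Psi$) gets consumed, and it is what Chapuy proves by an explicit case analysis of the rearranged tour. Declaring this ``the main obstacle'' and asserting that the rest is routine leaves the proof essentially empty, since nothing besides that obstacle requires proof.
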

%%%
%%%%%%%%%%%%%%%%%%%%%%%%%%%%%%%%%%%%%%%%%%%%%%%%%%%%%%%%%%%%%%%%%%%%%%%%%%%%%%%%%%%%%%%%%%%%%
%%%

Here $\Phi$ generates the trisection $\tau^{I}$ in a unicellular map of genus $g+1$
and the trisection $\tau^{II}$ persists when applying the mapping $\Psi$.

Gluing can be described as follow: \\
Given a unicellular map of $\mathfrak{m}_{g-k}$, together with a sequence of
vertices $V=\{v_1, \ldots v_{2k+1}\}$, where $v_i<_{\gamma} v_{i+1}$, $\forall 1\le i <2k+1$.
Then: \\
{\bf I.} we glue the last three vertices $v_{2k-1}$, $v_{2k}$ and $v_{2k+1}$ via $\Phi$, thereby
obtaining the unicellular map $\mathfrak{m}_{g-k+1}$ together with trisection $\tau^I$. \\
{\bf II.} we apply $\Psi(\mathfrak{m}_{g-k+i}, v_{2k-2i-1}, v_{2k-2i}, \tau^I)$ $k-1$ times for
$i=1$ to $i=k-1$. This produces the unicellular map $\mathfrak{m}_g(n)$, together with a
trisection $\tau^{II}$.
The process defines a mapping
$$
\Lambda(\mathfrak{m}_{g-k}, v_1, \ldots, v_{2k+1})=(\mathfrak{m}_g, \tau),
$$
where we do not label $\tau$ by type since in general we do not know whether $\Psi$ has been
applied. The order of the vertices in $V$ is given by the partial order determined by
$\gamma$. Thus $V$ can be considered as a set of vertices in $\mathfrak{m}_{g-k}$, ordered
by $<{\gamma}$. $\Lambda$ merges vertices from right to left by first
applying $\Phi$ once then applying $\Psi$ several times.

$\Lambda$ is reversed as follows: given a unicellular map $\mathfrak{m}_g$ of genus $g$
and $i=0$: \\
{\bf 1.} if $\tau$ is type II trisection in $\mathfrak{m}_{g-i}$, then let $(\mathfrak{m}_{g-i-1},
v_{2i+1},v_{2i+2}, \tau)=\Psi^{-1}(\mathfrak{m}_{g-i},\tau)$. We increase $i$ to $i+1$ and repeat step {\bf 1}. \\
{\bf 3.} if $\tau$ has type I, let $(\mathfrak{m}_{g-i}, v_{2i+1},v_{2i+2}, v_{2i+3})=\Phi^{-1}(
         \mathfrak{m}_{g-i-1},\tau)$. \\
Then we return
$$
\Xi(\mathfrak{m}_g ,\tau)=(\mathfrak{m}_{g-i}, V_{\tau}).
$$
By construction, $\Lambda$ and $\Xi$ are inverse to each other.

%%%
%%%%%%%%%%%%%%%%%%%%%%%%%%%%%%%%%%%%%%%%%%%%%%%%%%%%%%%%%%%%%%%%%%%%%%%%%%%%%%%%%%%%%%%%%%%%%
%%%
\begin{theorem} \cite{Chapuy:11} \label{T:bij}
Let $U_g^t$ denote the set of tuples $(\mathfrak{m}_g, v_1, \ldots, v_{t})$,
where $v_1, \ldots, v_{t}$ is a sequence of vertices in $\mathfrak{m}_g$.
Furthermore, let $D_g$ denote the set of tuples $(\mathfrak{m}_g, \tau)$,
where $\tau$ is a trisection of $\mathfrak{m}_g$.
Then
$$
\Lambda\colon {\dot\bigcup}_{k=0}^{g-1} U_k^{2g-2k+1} \rightarrow D_g,
\quad
\Xi\colon D_g \rightarrow {\dot\bigcup}_{k=0}^{g-1} U_k^{2g-2k+1}
$$
are bijections and $\Lambda\circ \Xi=\text{\rm id}$ and $\Xi\circ \Lambda=\text{\rm id}$.
\end{theorem}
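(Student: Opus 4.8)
The plan is to deduce the theorem from the Proposition (that $\Phi$ and $\Psi$ are individually bijections) by exhibiting $\Lambda$ and $\Xi$ as matching compositions of these elementary bijections and their inverses, using the trisection type as the bookkeeping device that synchronizes the two directions.

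First I would verify that $\Lambda$ and $\Xi$ are well-defined and land in the asserted sets. For $\Lambda$, starting from $\mathfrak{m}_{g-k}$ with $2k+1$ ordered vertices, the single application of $\Phi$ consumes three vertices and raises the genus by one, while each of the $k-1$ subsequent applications of $\Psi$ consumes two further vertices and again raises the genus by one; thus all $3+2(k-1)=2k+1$ vertices are used, the genus rises from $g-k$ to $g$, and unicellularity and the edge count are preserved throughout, so $\Lambda(\mathfrak{m}_{g-k},v_1,\dots,v_{2k+1})\in D_g$. For $\Xi$, the process strictly lowers the genus at every step and is therefore guaranteed to terminate; if it performs $i$ applications of $\Psi^{-1}$ before the terminating $\Phi^{-1}$, it produces $2i+3$ vertices and lowers the genus by $i+1$, so setting $k=i+1$ the output lies in $U_{g-k}^{2k+1}=U_{g-k}^{2g-2(g-k)+1}$, i.e. in the piece of ${\dot\bigcup}_{k=0}^{g-1}U_k^{2g-2k+1}$ indexed by the resulting genus $g-k$. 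Since distinct pieces carry maps of distinct genus, they are automatically disjoint and $\Xi$ selects exactly one of them.

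The heart of the argument is to show that the type~I / type~II dichotomy drives the reverse process correctly. Here I would establish the invariant that, within a run of $\Lambda$, the trisection created by the initial $\Phi$ is of type~I, and that after the first application of $\Psi$ the current trisection is of type~II and remains type~II under all further applications of $\Psi$; this is precisely the persistence asserted after the Proposition. Consequently the final trisection $\tau$ output by $\Lambda$ is of type~I exactly when $k=1$ and of type~II otherwise. Running $\Xi$ on $(\mathfrak{m}_g,\tau)$ then simply reads off this marker: it applies $\Psi^{-1}$ precisely while the trisection is type~II and switches to $\Phi^{-1}$ precisely when the trisection is type~I, so it undoes the gluings in the reverse of the order in which $\Lambda$ performed them. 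Because $\Phi$ and $\Psi$ are bijections by the Proposition, each undone step recovers the correct pair (respectively triple) of vertices together with the correct intermediate map, and the two compositions telescope to the identity in both directions, yielding $\Lambda\circ\Xi=\mathrm{id}$ and $\Xi\circ\Lambda=\mathrm{id}$.

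The main obstacle I anticipate is not the genus and counting bookkeeping but the verification of the trisection-type invariant together with the correctness of the recovered $<_{\gamma}$-order of the extracted vertices. One must check, using the explicit description of how $\Psi$ inserts two vertices relative to the boundary cycle $\gamma$, that $\Psi^{-1}$ applied to a type~II trisection returns exactly the two vertices $v_{2i+1},v_{2i+2}$ in their $<_{\gamma}$-order and leaves a trisection whose type again signals whether to continue, and symmetrically that $\Phi^{-1}$ returns the last three vertices in their correct order. Once this synchronization of the type-marker with the stopping rule and with the vertex ordering is in place, the bijectivity claim follows immediately from the step-by-step invertibility of $\Phi$ and $\Psi$.
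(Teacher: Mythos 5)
Your proposal is correct and follows essentially the same route the paper takes: the paper does not prove this theorem itself (it is cited from Chapuy) but presents exactly this argument in outline, namely that $\Lambda$ is one application of $\Phi$ followed by $k-1$ applications of $\Psi$, that $\Xi$ uses the type~I/type~II dichotomy as the stopping rule to undo these steps in reverse, and that invertibility then follows from the Proposition, summarized by the remark ``By construction, $\Lambda$ and $\Xi$ are inverse to each other.'' Your additional bookkeeping (vertex counts, genus increments, and the flagged verification of the type invariant and $<_\gamma$-order recovery) fills in precisely the details the paper defers to \cite{Chapuy:11}.
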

%%%
%%%%%%%%%%%%%%%%%%%%%%%%%%%%%%%%%%%%%%%%%%%%%%%%%%%%%%%%%%%%%%%%%%%%%%%%%%%%%%%%%%%%%%%%%%%%%
%%%

Let $\epsilon_g(n)$ denote the number of unicellular map of genus $g$ having $n$ edges.
Then we have the following enumerative corollary
%%%
%%%%%%%%%%%%%%%%%%%%%%%%%%%%%%%%%%%%%%%%%%%%%%%%%%%%%%%%%%%%%%%%%%%%%%%%%%%%%%%%%%%%%%%%%%%%%
%%%
\begin{corollary}
\begin{equation} \label{E:induction1}
2g\cdot \epsilon_g(n) = {n+1-2(g-1) \choose 3} \epsilon_{g-1}(n)
+ \cdots + {n+1 \choose 2g+1} \epsilon_0(n).
\end{equation}
\end{corollary}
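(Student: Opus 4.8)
The plan is to derive the enumerative identity \eqref{E:induction1} as a direct counting consequence of the bijection established in Theorem~\ref{T:bij}. The strategy is to count the set $D_g$ of pairs $(\mathfrak{m}_g,\tau)$ in two different ways and equate the results. First I would count $D_g$ directly: by Lemma~\ref{L:trisection}, every unicellular map of genus $g$ has exactly $2g$ trisections, so the number of such pairs is simply $2g\cdot\epsilon_g(n)$, which is precisely the left-hand side of the claimed identity.

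Next I would count $D_g$ through the bijection $\Lambda$. Since $\Lambda$ maps the disjoint union ${\dot\bigcup}_{k=0}^{g-1}U_k^{2g-2k+1}$ bijectively onto $D_g$, we have $|D_g|=\sum_{k=0}^{g-1}|U_k^{2g-2k+1}|$. It therefore remains to evaluate $|U_k^{t}|$, the number of tuples $(\mathfrak{m}_k,v_1,\ldots,v_t)$ consisting of a genus-$k$ unicellular map of $n$ edges together with an ordered sequence of $t$ vertices. Here $t=2g-2k+1$. The key combinatorial observation is that a unicellular map of genus $k$ with $n$ edges has exactly $n+1-2k$ vertices: indeed, such a map has $n$ edges, one boundary component (unicellularity), and by the Euler characteristic relation \eqref{E:euler}--\eqref{E:genus} the vertex count $v$ satisfies $v-n+1=2-2k$, giving $v=n+1-2k$. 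Since the sequence $v_1,\ldots,v_t$ is required to be ordered (by the $<_\gamma$ partial order, i.e.\ a choice of $t$ distinct vertices in canonical order), the number of ways to select it is $\binom{n+1-2k}{t}=\binom{n+1-2k}{2g-2k+1}$, and hence $|U_k^{2g-2k+1}|=\binom{n+1-2k}{2g-2k+1}\epsilon_k(n)$.

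Putting these together yields
$$
2g\cdot\epsilon_g(n)=\sum_{k=0}^{g-1}\binom{n+1-2k}{2g-2k+1}\epsilon_k(n),
$$
and it only remains to check that the summands match the terms written in \eqref{E:induction1}. Reading the sum from $k=g-1$ down to $k=0$: the top term $k=g-1$ gives $\binom{n+1-2(g-1)}{3}\epsilon_{g-1}(n)$, matching the leading term on the right, while the bottom term $k=0$ gives $\binom{n+1}{2g+1}\epsilon_0(n)$, matching the last term; the intermediate binomial coefficients $\binom{n+1-2k}{2g-2k+1}$ interpolate exactly as indicated by the ellipsis.

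I expect the main subtlety to be the careful bookkeeping of the vertex count and the interpretation of what $U_k^t$ counts, in particular confirming that the sequences $v_1,\ldots,v_t$ are counted as unordered choices promoted to the canonical $<_\gamma$-order (so that the binomial coefficient, rather than a falling factorial, is the correct enumerator). Once the vertex-count formula $n+1-2k$ is pinned down via Euler's relation and the ordering convention is settled, the remainder is a routine matching of indices against the telescoped form of \eqref{E:induction1}.
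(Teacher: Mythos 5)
Your proposal is correct and follows essentially the same argument the paper intends: double-counting $D_g$ via Lemma~\ref{L:trisection} on the left and via the bijection of Theorem~\ref{T:bij} on the right, with the vertex count $n+1-2k$ supplied by the Euler characteristic. Your explicit handling of the ordering convention for $v_1,\ldots,v_t$ (canonical $<_\gamma$-order, hence a binomial coefficient rather than a falling factorial) is a welcome clarification of a point the paper leaves implicit.
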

%%%
%%%%%%%%%%%%%%%%%%%%%%%%%%%%%%%%%%%%%%%%%%%%%%%%%%%%%%%%%%%%%%%%%%%%%%%%%%%%%%%%%%%%%%%%%%%%%
%%%
Here the $2g$-factor on left hand side counts the number of trisection in $\mathfrak{m}_g$ and
the binomial coefficients on the right hand side counts the number of distinct
selections of subsets of $(2k+1)$ vertices from a unicellular map $\mathfrak{m}_{g-k}$.

Iterating $\Xi$, we obtain
\begin{equation} \label{E:induction2}
\epsilon_g(n)  = \sum_{0=g_0<g_1<\cdots<g_r=g} \prod_{i=1}^r
\frac{1}{2g_i}{n+1-2g_{i-1} \choose 2(g_i-g_{i-1})+1} \cdot \epsilon_0(n),
\end{equation}
where $\epsilon_0(n)$ is the number of planar trees having $n$ edges, i.e.~the Catalan
number $\frac{1}{n+1}{2n \choose n}$.

%%%
%%%%%%%%%%%%%%%%%%%%%%%%%%%%%%%%%%%%%%%%%%%%%%%%%%%%%%%%%%%%%%%%%%%%%%%%%%%%%%%%%%%%%%%%%%%%%
%%%
\section{Uniform generation of matchings}\label{S:UniformM}
%%%
%%%%%%%%%%%%%%%%%%%%%%%%%%%%%%%%%%%%%%%%%%%%%%%%%%%%%%%%%%%%%%%%%%%%%%%%%%%%%%%%%%%%%%%%%%%%%
%%%

In this section, we show how to generate a matching of a given genus $g$
over $2n$ vertices with uniform probability.

Any unicellular map $\mathfrak{m}_g$ together with one of its $2g$ trisections
is mapped via $\Xi$ into a unicellular map of lower genus. Note that the genus decreases
at least by one. Therefore, by iterating the process finitely many times (at most $g$),
we arrive at a unicellular map of genus $0$, i.e~a planar tree.

% glue path and and slice path

For our construction it is important to keep track of the particular slicing process.
Accordingly, we introduce {\it slice/glue paths} as follows.
%%%
%%%%%%%%%%%%%%%%%%%%%%%%%%%%%%%%%%%%%%%%%%%%%%%%%%%%%%%%%%%%%%%%%%%%%%%%%%%%%%%%%%%
%%%
\begin{definition}
Suppose $\mathfrak{m}_g$ is a unicellular map of genus $g$ having $n$ edges.
Then a sequence unicellular maps
$$
(\mathfrak{m}^0=\mathfrak{m}_{g_0=0}, \mathfrak{m}^1=\mathfrak{m}_{g_1}, \ldots,
\mathfrak{m}^r=\mathfrak{m}_{g_r=g} )
$$
is called a slice path from $\mathfrak{m}_g$ to $\mathfrak{m}_0$
and a glue path when considered from $\mathfrak{m}_0$ to $\mathfrak{m}_g$,
where $\Xi(\mathfrak{m}_{g_i}, \tau_i)=(\mathfrak{m}_{g_{i-1}}, V_{g_{i-1}})$
holds for some $\tau_i$ in $\mathfrak{m}_{g_i}$, $0< i \le r$.
\end{definition}
%%%
%%%%%%%%%%%%%%%%%%%%%%%%%%%%%%%%%%%%%%%%%%%%%%%%%%%%%%%%%%%%%%%%%%%%%%%%%%%%%%%%%%%
%%%

We next consider $P_g(\mathfrak{m}^0)$, the set of distinct glue paths
from a given $\mathfrak{m}^0=\mathfrak{m}_0$ to some unicellular maps of
fixed genus $g$.

%%%
%%%%%%%%%%%%%%%%%%%%%%%%%%%%%%%%%%%%%%%%%%%%%%%%%%%%%%%%%%%%%%%%%%%%%%%%%%%%%%%%%%%
%%%
\begin{lemma}
The cardinality of $P_g(\mathfrak{m}^0)$ is given by
$$
\sum_{0=g_0<g_1<\cdots<g_r=g}\prod_{i=1}^r \frac{1}{2g_i}{n+1-2g_{i-1} \choose 2(g_i-g_{i-1})+1}.
$$
\end{lemma}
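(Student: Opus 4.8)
The plan is to stratify the glue paths issuing from the fixed tree $\mathfrak{m}^0$ by their genus profile $0=g_0<g_1<\cdots<g_r=g$, to count the paths with a prescribed profile as a product of single-step contributions, and then to sum over all profiles. The goal is to recognise this sum as the stated right-hand side, which by \eqref{E:induction2} is exactly $\epsilon_g(n)/\epsilon_0(n)$. Accordingly there are two things to establish: the contribution of a single gluing step, and the fact that these contributions multiply along a profile and reassemble into the claimed double sum.

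First I would isolate one step. An intermediate map $\mathfrak{m}^{i-1}$ of genus $g_{i-1}$ has $n+1-2g_{i-1}$ vertices (by \eqref{E:ee} read off its Poincar\'{e} dual), and a gluing raising the genus to $g_i$ amounts to choosing the $\gamma$-ordered set $V$ of $2(g_i-g_{i-1})+1$ vertices and applying $\Lambda$; hence there are $\binom{n+1-2g_{i-1}}{2(g_i-g_{i-1})+1}$ gluings available. By Theorem~\ref{T:bij} each of them returns a genus-$g_i$ map together with a trisection, and distinct gluings give distinct pairs $(\mathfrak{m}^{i},\tau)$. Since by Lemma~\ref{L:trisection} a genus-$g_i$ map carries exactly $2g_i$ trisections, collecting the gluings according to their underlying map is meant to show that the number of \emph{distinct} maps $\mathfrak{m}^i$ produced equals $\tfrac{1}{2g_i}\binom{n+1-2g_{i-1}}{2(g_i-g_{i-1})+1}$, i.e.\ the $i$-th factor of the product.

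I would then chain the steps: a glue path of profile $(g_0,\dots,g_r)$ is obtained by performing these choices successively, so the number of such paths should be $\prod_{i=1}^r \tfrac{1}{2g_i}\binom{n+1-2g_{i-1}}{2(g_i-g_{i-1})+1}$, and summing over all admissible profiles yields the asserted expression. Structurally this is an induction on $g$ with base case $P_0(\mathfrak{m}^0)=\{\mathfrak{m}^0\}$ of cardinality $1$: peeling off the final step and feeding in the single-step count produces precisely the recursion $2g\,|P_g|=\sum_{k<g}\binom{n+1-2k}{2(g-k)+1}\,|P_k|$, which is the defining recursion for the coefficients appearing in \eqref{E:induction2}.

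The delicate point, and the step I expect to be the main obstacle, is the division by $2g_i$ in the single-step count. Out of one fixed source map the $\binom{\cdots}{\cdots}$ available gluings need not cover each reachable genus-$g_i$ map exactly $2g_i$ times, since the $2g_i$ trisections of a given target may slice back to several different lower maps; a per-map tally from a single source is therefore not manifestly the quotient, and need not even be integral. The way I would make the factor $\tfrac{1}{2g_i}$ rigorous is to run the bijection $\Xi\colon D_{g_i}\to\dot\bigcup_{k}U_k^{2(g_i-k)+1}$ of Theorem~\ref{T:bij} in aggregate, summing over all genus-$g_i$ maps at once so that $2g_i$ enters legitimately through $|D_{g_i}|=2g_i\,\epsilon_{g_i}(n)$, and then to argue that this aggregate recursion descends uniformly to the individual source $\mathfrak{m}^0$. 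Securing that uniformity across the choice of $\mathfrak{m}^0$ is the crux on which the clean product formula ultimately rests.
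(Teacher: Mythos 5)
Your plan coincides with the paper's own proof: stratify the paths by their genus profile, count one gluing step as $\tfrac{1}{2g_i}\binom{n+1-2g_{i-1}}{2(g_i-g_{i-1})+1}$ (vertex selections divided by the $2g_i$ trisections of the target), multiply along the profile and sum. Where you go beyond the paper is in refusing to take the division by $2g_i$ for granted: the paper simply asserts that ``the same $\mathfrak{m}_{g_i}$ will be produced exactly $2g_i$ times,'' whereas you correctly observe that Theorem~\ref{T:bij} and Lemma~\ref{L:trisection} only control the situation in aggregate --- the $2g_i$ trisections of a fixed target may slice back to different sources (indeed to sources of different genera), so the gluings out of one fixed $\mathfrak{m}^{i-1}$ need not cover each reachable target exactly $2g_i$ times. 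You then defer the repair to an unproven ``descend uniformly to the individual source'' step and name it the crux. That deferred step is a genuine gap, and it cannot be closed, because the per-source uniformity is false: for $g=1$ and $n=6$ the formula evaluates to $\tfrac{1}{2}\binom{7}{3}=17.5$, which is not an integer and hence not the cardinality of any set of glue paths issuing from a fixed tree; equivalently $\epsilon_1(6)/\epsilon_0(6)=2310/132$ is not integral, so the genus-one maps cannot distribute evenly over the $132$ planar trees.

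What survives --- and what the subsequent Corollary actually needs --- is the aggregate or weighted version of the statement. Either read $|P_g(\mathfrak{m}^0)|$ as a weighted path count, each step contributing $1/(2g_i)$ for every trisection of $\mathfrak{m}^{i}$ that slices back to $\mathfrak{m}^{i-1}$, or average over all $\epsilon_0(n)$ starting trees, where Lemma~\ref{L:trisection} legitimately supplies the factor $2g_i$ through $|D_{g_i}|=2g_i\,\epsilon_{g_i}(n)$ and eq.~(\ref{E:induction1}) gives exactly the recursion you wrote down. The clean way to finish is the induction you hint at, but run on target maps rather than on source trees: let $f(\mathfrak{m}_k)$ be the trisection-weighted number of slice paths from a genus-$k$ map down to genus $0$; each such map has exactly $2k$ trisections, each slicing lands on a map with $f\equiv 1$ by induction, hence $f(\mathfrak{m}_k)=\tfrac{1}{2k}\cdot 2k=1$ for every map, which yields the uniformity of the sampler without ever asserting a per-tree path count. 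In short: same skeleton as the paper, a correctly diagnosed soft spot that the paper itself glosses over, but the patch you propose points in a direction that is provably unavailable.
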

%%%
%%%%%%%%%%%%%%%%%%%%%%%%%%%%%%%%%%%%%%%%%%%%%%%%%%%%%%%%%%%%%%%%%%%%%%%%%%%%%%%%%%%
%%%
\begin{proof}
In order to construct $\mathfrak{m}_{g_{i}}$ from $\mathfrak{m}_{g_{i-1}}$,
$0<i\le r$, we need to select $2(g_{i}-g_{i-1})+1$ vertices from
$\mathfrak{m}_{g_{i-1}}$.

Euler characteristic shows that there are $(n+1-2g_{i-1})$ distinct vertices in
$\mathfrak{m}_{g_{i-1}}$, whence there are
${n+1-2g_{i-1} \choose 2(g_{i}-g_{i-1})+1}$ ways to select a subset of vertices
$V_{g_{i-1}}$.

On the other hand, the mapping $\Lambda$ produces $\mathfrak{m}_{g_i}$ with a
labeled trisection $\tau_i$, i.e., the same $\mathfrak{m}_{g_i}$ will be produced
exactly $2g_i$ times.
Accordingly, we need to normalize the production by a factor $1/2g_i$ for each
application of $\Lambda$.

As a result the total number of glue paths in $P_g(\mathfrak{m}^0)$ is
$$
\sum_{0=g_0<g_1<\cdots<g_r=g}\prod_{i=1}^r \frac{1}{2g_i}{n+1-2g_{i-1} \choose 2(g_i-g_{i-1})+1},
$$
which is exactly the coefficient of $\epsilon_0(n)$ in eq.~(\ref{E:induction2}).
\end{proof}

The problem of generating a unicellular map of genus $g$ having $n$ edges with
uniform probability thus splits into two parts: we first generate a planar
tree $\mathfrak{m}_0$ with $n$ edges with uniform probability. Second we
generate a glue path from $P_g(\mathfrak{m}^0)$ with uniform probability.
It is well-known how to implement the first step by a linear time (rejection) sampler \cite{SecondaryStructureSampling}
and it thus remains to present an algorithm for the second step.

We construct a glue path inductively.
Suppose we are at step $i$ and we have constructed a unicellular map
$\mathfrak{m}^i$ of genus $g_i$. Then the next
genus $g_{i+1}$ is suggested by the process ${\tt NextGenus}$. This process
considers the sequence of genus $g_0,\ldots, g_i$ and the target genus $g$
as input, and returns the genus $g_{i+1}$.
Let $\mathbb{P}(g_{i+1}=t\mid g_0, \ldots, g_i, g)$ denote the probability of
$g_{i+1}$ equals $t$ under the condition that $g_0, \ldots, g_i$ are
the genus of the previous steps and $g$ is the target genus.
Then
\begin{equation} \label{E:nextgenus}
\mathbb{P}(g_{i+1}=t\mid g_0, \ldots, g_i, g)=
\frac{\sum\limits_{t_0=g_0, \ldots, t_i=g_i, g_{i+1}=t<t_{i+1}<\cdots<t_r=g}
\prod\limits_{i=1}^r \frac{1}{2t_i} {n+1-2t_{i-1} \choose 2(t_i-t_{i-1})+1}}
{\sum\limits_{t_0=g_0, \ldots, t_i=g_i<\cdots<t_r=g}\prod\limits_{i=1}^r \frac{1}{2t_i}{n+1-2t_{i-1} \choose 2(t_i-t_{i-1})+1}}.
\end{equation}

Next we select the sequence of vertices from $\mathfrak{m}^{i}$ by process
{\tt SelectVertex}.
This process chooses vertices in $2(g_{i+1}-g_{i})+1$ independent steps. The probability
of a vertex being selected is given by
$1/(n+2-2g_{i}-k)$, where $(n+2-2g_{i}-k)$ is the number of remaining non-selected
vertices in the $k$th step, $1\le k \le 2(g_{i+1}-g_{i})+1$.
Since the selected vertices are ordered automatically by $<_{\gamma_{\mathfrak{m}^i}}$,
the same set is generated with multiplicity $(2(g_{i+1}-g_{i})+1)!$.
Normalizing the resulting term by the factor $1/(2(g_{i+1}-g_{i})+1)!$,
the probability of the set $V_{i}$, $0\le i< r$ is given by
\begin{equation} \label{E:select}
\mathbb{P}_{\tt Select}(V_{i}) = \frac{1}{(2(g_{i+1}-g_{i})+1)!} \cdot
\frac{1}{n+1-2g_{i}} \cdots  \frac{1}{n-2g_{i+1}}
= \frac{1}{{n+1-2g_{i} \choose 2(g_{i+1}-g_{i})+1}}.
\end{equation}

After the sequence of vertices $V_i$ is selected, a unicellular map $\mathfrak{m}^{i+1}$ is
constructed by the process {\tt Glue}, applying mapping $\Lambda$.
We present the pseudocode of the procedures in Algorithm~\ref{A:path}.

%%%
%%%%%%%%%%%%%%%%%%%%%%%%%%%%%%%%%%%%%%%%%%%%%%%%%%%%%%%%%%%%%%%%%%%%%%%%%%
%%%
\begin{algorithm}
\begin{algorithmic}[1]
\STATE {\tt UniformMatching}~($\mathfrak{m}^0, TargetGenus$)
\STATE {$i\leftarrow 0$}
\WHILE {$g_i \le TargetGenus$}
\STATE {$g_{i+1} \leftarrow {\tt NextGenus}~(g_0, \ldots, g_i, TargetGenus)$}
\STATE {$V_i \leftarrow {\tt SelectVertex}~(\mathfrak{m}^i, 2(g_{i+1}-g_i)+1)$}
\STATE {$\mathfrak{m}^{i+1} \leftarrow {\tt Glue}~(\mathfrak{m}^i, V_i)$}
\STATE {$i\leftarrow i+1$}
\ENDWHILE
\STATE \textbf{return} $\mathfrak{m}^i$
\caption {\small }
\label{A:path}
\end{algorithmic}
\end{algorithm}
%%%
%%%%%%%%%%%%%%%%%%%%%%%%%%%%%%%%%%%%%%%%%%%%%%%%%%%%%%%%%%%%%%%%%%%%%%%%%
%%%
Assuming the target genus to be constant and taking into account that during our construction the
genus is strictly increasing, the while-loop of Algorithm 1 is executed only a constant number of times.
Using appropriate memorization techniques, {\tt NextGenus} and {\tt Glue} can
be implemented in constant time and {\tt SelectVertex} in linear time.
Thus, combined with a linear time sampler
for planar trees, our approach allows for the uniform generation of random matchings in time $O(n)$.
%%%
%%%%%%%%%%%%%%%%%%%%%%%%%%%%%%%%%%%%%%%%%%%%%%%%%%%%%%%%%%%%%%%%%%%%%%%%%
%%%
\begin{lemma}
Given a planar tree $\mathfrak{m}^0$ with $n$ edges and a genus $g$,
the probability of a glue path $p_g$ generated by Algorithm~\ref{A:path} is
$\epsilon_0(n)/\epsilon_g(n)$.
\end{lemma}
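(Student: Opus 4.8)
The plan is to compute the probability of a glue path as the product, over the $r$ gluing steps, of the probability that {\tt NextGenus} proposes the correct genus $g_{i+1}$ and the probability that {\tt SelectVertex} returns a vertex set producing $\mathfrak{m}^{i+1}$, and then to show that almost everything either telescopes or cancels. First I would introduce, for each intermediate genus $h$, the \emph{tail weight}
$$
W(h) \;=\; \sum_{h=s_0<s_1<\cdots<s_m=g}\ \prod_{l=1}^m \frac{1}{2 s_l}\binom{n+1-2 s_{l-1}}{2(s_l-s_{l-1})+1},
$$
so that $W(g)=1$ and, by eq.~(\ref{E:induction2}) together with the preceding cardinality lemma, $W(0)=\epsilon_g(n)/\epsilon_0(n)$.

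Reading eq.~(\ref{E:nextgenus}), the fixed prefix $t_0=g_0,\dots,t_i=g_i$ cancels between numerator and denominator, leaving the clean factorization
$$
\mathbb{P}(g_{i+1}\mid g_0,\dots,g_i,g)=\frac{1}{2g_{i+1}}\binom{n+1-2g_i}{2(g_{i+1}-g_i)+1}\,\frac{W(g_{i+1})}{W(g_i)}.
$$
Multiplying these over $i=0,\dots,r-1$, the consecutive ratios $W(g_{i+1})/W(g_i)$ telescope to $W(g)/W(0)=1/W(0)=\epsilon_0(n)/\epsilon_g(n)$, so the product of the {\tt NextGenus} probabilities equals $\frac{\epsilon_0(n)}{\epsilon_g(n)}\prod_{i=0}^{r-1}\frac{1}{2g_{i+1}}\binom{n+1-2g_i}{2(g_{i+1}-g_i)+1}$. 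Multiplying in the {\tt SelectVertex} probabilities $\mathbb{P}_{\tt Select}(V_i)=1/\binom{n+1-2g_i}{2(g_{i+1}-g_i)+1}$ from eq.~(\ref{E:select}) cancels every binomial coefficient and leaves $\frac{\epsilon_0(n)}{\epsilon_g(n)}\prod_{i=1}^{r}\frac{1}{2g_i}$.

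The remaining factor $\prod_{i=1}^r \frac{1}{2g_i}$ must be absorbed, and this is the main obstacle. The point is that {\tt Glue} applies $\Lambda$, which by construction outputs a unicellular map \emph{decorated} with a trisection, whereas a glue path records only the underlying maps $\mathfrak{m}^0,\dots,\mathfrak{m}^r$; hence each path is realized by several distinct sequences of selected vertex sets. Using Lemma~\ref{L:trisection} (a genus-$g_i$ map has exactly $2g_i$ trisections) and the bijectivity of $\Lambda$ and $\Xi$ (Theorem~\ref{T:bij}), I would argue that the number of decorated realizations of a fixed glue path with genus sequence $g_0<\cdots<g_r$ is exactly $\prod_{i=1}^r 2g_i$: each produced map $\mathfrak{m}^i$ ($1\le i\le r$) is generated once for each of its $2g_i$ trisection labels. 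Summing the per-realization probability $\frac{\epsilon_0(n)}{\epsilon_g(n)}\prod_{i=1}^r\frac{1}{2g_i}$ over these $\prod_{i=1}^r 2g_i$ realizations cancels the surviving factor and yields $\mathbb{P}(p_g)=\epsilon_0(n)/\epsilon_g(n)$. The delicate part I expect to have to justify carefully is precisely this uniform $\prod_{i=1}^r 2g_i$ multiplicity — namely that the over-counting by trisection labels is exactly the $1/2g_i$ normalization already isolated in the cardinality lemma — rather than the (routine) telescoping and cancellation of binomials.
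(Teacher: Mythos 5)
Your telescoping computation is correct and is in substance the paper's own calculation: with the tail weight $W(h)$ as you define it, eq.~(\ref{E:nextgenus}) factors as $\frac{1}{2g_{i+1}}\binom{n+1-2g_i}{2(g_{i+1}-g_i)+1}\,W(g_{i+1})/W(g_i)$, the ratios telescope to $1/W(0)=\epsilon_0(n)/\epsilon_g(n)$ via eq.~(\ref{E:induction2}), and the binomials cancel against eq.~(\ref{E:select}). So the probability of an individual \emph{run} of Algorithm~\ref{A:path} (genus sequence together with the chosen vertex sets) is indeed $\frac{\epsilon_0(n)}{\epsilon_g(n)}\prod_{i=1}^r\frac{1}{2g_i}$, and you are right that the surviving factor $\prod_i\frac{1}{2g_i}$ is the crux; the paper's displayed derivation silently discards it between its second and third lines.

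Your absorption step, however, does not work as stated. You claim that a fixed glue path $(\mathfrak{m}^0,\dots,\mathfrak{m}^r)$ is realized by exactly $\prod_{i=1}^r 2g_i$ vertex-set sequences. By Theorem~\ref{T:bij} the $2g_{i+1}$ trisections of $\mathfrak{m}^{i+1}$ are distributed by $\Xi$ over preimages $(\mathfrak{m}',V')$ whose underlying map and genus may differ from $\mathfrak{m}^i$ and $g_i$, so the number of sets $V_i$ with $\Lambda(\mathfrak{m}^i,V_i)=(\mathfrak{m}^{i+1},\cdot)$ is only the number of trisections that $\Xi$ sends back to $\mathfrak{m}^i$ specifically, which can be as small as $1$. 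Concretely, for $n=2$, $g=1$: there is a unique unicellular map of genus one with two edges and it has two trisections, but $\Lambda$ is a bijection from the two pairs (tree, all three vertices) onto the two trisection-decorated copies of that map, so the two trisections slice to the two \emph{distinct} planar trees. Each glue path therefore has exactly one realization and is generated with probability $1$, not $\epsilon_0(2)/\epsilon_1(2)=2$ (which could not be a probability in any case). What is true, and what the subsequent corollary actually needs, is the aggregate identity: summing the per-run probability over all runs ending at a fixed $\mathfrak{m}_g$ \emph{and over all starting trees} gives $\epsilon_0(n)/\epsilon_g(n)$, because such runs correspond bijectively under $\Xi$ to iterated slicing histories of $\mathfrak{m}_g$ (at genus $h$ choose one of its $2h$ trisections and slice), and Lemma~\ref{L:trisection} applied at each level gives $\sum_{\text{histories}}\prod_{i=1}^r\frac{1}{2g_i}=1$. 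You should either prove this aggregate statement or restate the lemma for trisection-decorated paths with the extra factor $\prod_{i=1}^r\frac{1}{2g_i}$; as it stands your final cancellation proves a claim that fails for individual paths.
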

%%%
%%%%%%%%%%%%%%%%%%%%%%%%%%%%%%%%%%%%%%%%%%%%%%%%%%%%%%%%%%%%%%%%%%%%%%%%%
%%%
\begin{proof}
Assume a glue path
$$
p_g=\{\mathfrak{m}^0=\mathfrak{m}_{g_0=0}, \mathfrak{m}^1=\mathfrak{m}_{g_1}, \ldots,
\mathfrak{m}^r=\mathfrak{m}_{g_r=g} \}
$$
is generated by Algorithm~\ref{A:path}.
Since for each step, the process of choosing the genus for the next step and selecting labeled vertices
is independent, the probability of $P_{g}$ is given by
$$
\mathbb{P}(P_g) = \prod_{i=0}^{r-1} \mathbb{P}(g_{i+1}=t_{i+1}|g_0,\ldots, g_i, g) \cdot \mathbb{P}(V_i).
$$
We substitute eq.~(\ref{E:nextgenus}) and eq.~(\ref{E:select}) and obtain
\begin{eqnarray*}
\mathbb{P}(P_g) & = & \prod_{i=0}^{r-1} \frac{\sum_{t_0=g_0, \ldots, t_i=g_i, g_{i+1}=t<t_{i+1}<\cdots<t_r=g}
\prod_{i=1}^r \frac{1}{2t_i} {n+1-2t_{i-1} \choose 2(t_i-t_{i-1})+1}}
{\sum_{t_0=g_0, \ldots, t_i=g_i<\cdots<t_r=g}\prod_{i=1}^r \frac{1}{2t_i}{n+1-2t_{i-1} \choose 2(t_i-t_{i-1})+1}}
\cdot \frac{1}{{n+1-2g_{i} \choose 2(g_{i+1}-g_{i})+1}} \\
& = & \frac{\prod_{t_1=g_1, \ldots, t_r=g_r} \frac{1}{2t_i}{n+1-2t_{i-1} \choose 2(t_i-t_{i-1})+1}}
{\sum_{0=t_0<\cdots<t_r=g} \prod_{i=1}^r \frac{1}{2t_i}{n+1-2t_{i-1} \choose 2(t_i-t_{i-1})+1}}
\cdot  \prod_{i=0}^{r-1} \frac{1}{{n+1-2g_{i} \choose 2(g_{i+1}-g_{i})+1}} \\
& = & \frac{1}{\sum_{0=t_0<\cdots<t_r=g} \prod_{i=1}^r \frac{1}{2t_i}{n+1-2t_{i-1} \choose 2(t_i-t_{i-1})+1}} \\
& = & \frac{\epsilon_0(n)}{\epsilon_0(n)\cdot
\sum_{0=t_0<\cdots<t_r=g} \prod_{i=1}^r \frac{1}{2t_i}{n+1-2t_{i-1} \choose 2(t_i-t_{i-1})+1}} \\
& = & \frac{\epsilon_0(n)}{\epsilon_g(n)},
\end{eqnarray*}
whence the lemma.
\end{proof}

%%%
%%%%%%%%%%%%%%%%%%%%%%%%%%%%%%%%%%%%%%%%%%%%%%%%%%%%%%%%%%%%%%%%%%%%%%%%%
%%%
\begin{corollary}
Suppose a planar tree $\mathfrak{m}^0$ is uniformly generated, i.e., with probability
$1/\epsilon_0(n)$. Then a unicellular map $\mathfrak{m}^r=\mathfrak{m}_g$ is uniformly
generated by Algorithm~\ref{A:path} with probability $1/\epsilon_g(n)$.
\end{corollary}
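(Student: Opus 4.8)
The plan is to compute the output distribution of Algorithm~\ref{A:path} by summing, over all executions that terminate in a fixed unicellular map $\mathfrak{m}_g$ of genus $g$, the probability of each execution, and to show that this sum collapses to $1/\epsilon_g(n)$. The starting point is the observation that a complete execution factors into two independent parts: the generation of the planar tree $\mathfrak{m}^0$ and the generation of a glue path issuing from it.

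First I would record the probability of a single complete execution. By hypothesis the tree $\mathfrak{m}^0$ is produced with probability $1/\epsilon_0(n)$, and by the preceding lemma the glue path built on top of $\mathfrak{m}^0$ is produced with conditional probability $\epsilon_0(n)/\epsilon_g(n)$; crucially, the latter value is independent of both the particular tree and the particular path. Multiplying these, every complete execution has probability
\[
\frac{1}{\epsilon_0(n)}\cdot\frac{\epsilon_0(n)}{\epsilon_g(n)}=\frac{1}{\epsilon_g(n)}.
\]
Since the endpoint $\mathfrak{m}^r$ is determined by the execution, it follows that $\mathbb{P}(\mathfrak{m}^r=\mathfrak{m}_g)=N(\mathfrak{m}_g)/\epsilon_g(n)$, where $N(\mathfrak{m}_g)$ denotes the number of (tree, glue path) pairs whose terminal map is $\mathfrak{m}_g$.

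It then remains to show $N(\mathfrak{m}_g)=1$ for every genus-$g$ unicellular map, and this is the heart of the argument. I would establish it by a counting argument resting on three facts already available. By the cardinality lemma together with eq.~(\ref{E:induction2}), the number of glue paths from a fixed tree is $\epsilon_g(n)/\epsilon_0(n)$, so the total number of (tree, glue path) pairs is $\epsilon_0(n)\cdot\epsilon_g(n)/\epsilon_0(n)=\epsilon_g(n)$, which is exactly the number of genus-$g$ unicellular maps. Moreover, every such map is reachable, since applying the slicing map $\Xi$ repeatedly (each map of positive genus carries a trisection) drives it down to a planar tree, and reversing this chain via $\Lambda$ (Theorem~\ref{T:bij}) exhibits a glue path with that map as endpoint; hence $N(\mathfrak{m}_g)\ge 1$ for all $\mathfrak{m}_g$. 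Combining $\sum_{\mathfrak{m}_g}N(\mathfrak{m}_g)=\epsilon_g(n)$ with $N(\mathfrak{m}_g)\ge 1$ across exactly $\epsilon_g(n)$ maps forces $N(\mathfrak{m}_g)=1$ for every map, and the claim $\mathbb{P}(\mathfrak{m}^r=\mathfrak{m}_g)=1/\epsilon_g(n)$ follows.

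The hard part will be the uniqueness $N(\mathfrak{m}_g)=1$: a priori, different executions could reach the same terminal map through different final gluings, each leaving its own trisection behind, which would inflate $N$. The pigeonhole step above disposes of this cleanly once reachability and the two cardinality computations are in place, but the conceptual content is precisely the bijectivity of the iterated slicing/gluing correspondence of Theorem~\ref{T:bij}: it is this reversibility that guarantees the endpoint map determines its glue path uniquely, so that the extra trisection data generated along the way creates no spurious multiplicity.
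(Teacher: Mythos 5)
Your reduction of the corollary to the claim $N(\mathfrak{m}_g)=1$ correctly isolates where the difficulty lies, but that claim is false, and the pigeonhole argument you build on the paper's two lemmas does not survive a concrete test. Take $n=2$ and $g=1$: there are $\epsilon_0(2)=2$ planar trees, each with $n+1=3$ vertices, so from each tree there is exactly one possible gluing (select all three vertices); since $2\epsilon_1(2)=\binom{3}{3}\epsilon_0(2)$ forces $\epsilon_1(2)=1$, both trees glue to the \emph{same} genus-one map, whose two trisections slice back to the two \emph{different} trees. Hence $N(\mathfrak{m}_1)=2$, and the endpoint does not determine the (tree, glue path) pair. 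The reason your pigeonhole misfires is that the two inputs you feed it are not literal counts or probabilities: the cardinality formula $\sum\prod\frac{1}{2g_i}\binom{n+1-2g_{i-1}}{2(g_i-g_{i-1})+1}$ evaluates to $\tfrac12$ in this example, so it cannot be the cardinality of a set of distinct paths; it is a count of executions \emph{weighted} by $\prod_i 1/(2g_i)$. Correspondingly, a single execution of Algorithm~\ref{A:path} with genus sequence $g_1<\cdots<g_r=g$ occurs with probability $\frac{1}{\epsilon_g(n)}\prod_{i=1}^{r}\frac{1}{2g_i}$ rather than $\frac{1}{\epsilon_g(n)}$ (the telescoping of the conditional genus probabilities leaves the factor $\prod_i\frac{1}{2g_i}$ behind), so your very first step already fails in the example, where each of the two executions has probability $\tfrac12$ while $\epsilon_0(2)/\epsilon_1(2)=2$.

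The corollary is nevertheless true, but for a different reason: a fixed $\mathfrak{m}_g$ is in general the endpoint of many executions, and their weights $\prod_i\frac{1}{2g_i}$ sum to exactly $1$. This is proved by induction on the genus $h$ of the target map: by Theorem~\ref{T:bij} the possible last gluing steps $\Lambda(\mathfrak{m}',V)=(\mathfrak{m},\tau)$ producing $\mathfrak{m}$ are in bijection with the trisections $\tau$ of $\mathfrak{m}$, of which there are exactly $2h$ by Lemma~\ref{L:trisection}; each contributes $\frac{1}{2h}$ times the total weight of executions ending at its predecessor $\mathfrak{m}'=\Xi(\mathfrak{m},\tau)$, which is $1$ by the inductive hypothesis, giving $2h\cdot\frac{1}{2h}=1$. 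Multiplying by the per-execution probability $\frac{1}{\epsilon_g(n)}\prod_i\frac{1}{2g_i}$ then yields $\mathbb{P}(\mathfrak{m}^r=\mathfrak{m}_g)=1/\epsilon_g(n)$. In short: the reversibility of $\Xi$ does not make the glue path unique, as you assert at the end; it makes the \emph{multiplicity exactly compensate} the $1/(2g_i)$ normalization, and that cancellation is the argument you need in place of the uniqueness claim.
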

%%%
%%%%%%%%%%%%%%%%%%%%%%%%%%%%%%%%%%%%%%%%%%%%%%%%%%%%%%%%%%%%%%%%%%%%%%%%%
%%%

%%%
%%%%%%%%%%%%%%%%%%%%%%%%%%%%%%%%%%%%%%%%%%%%%%%%%%%%%%%%%%%%%%%%%%%%%%%%%
%%%
\section{Uniform generation of diagrams} \label{S:UniformD}
%%%
%%%%%%%%%%%%%%%%%%%%%%%%%%%%%%%%%%%%%%%%%%%%%%%%%%%%%%%%%%%%%%%%%%%%%%%%%
%%%
In this section, we extend our result of Section~\ref{S:UniformM} in order
to generate diagrams of genus $g$ with uniform probability. The idea is to
uniformly generate first a matching of genus $g$ with $n$ arcs. In a second
step we choose $(\ell-2n)$ unpaired vertices and insert them into
the matching.

Let
$\mathbb{P}_{d}(t=n|\ell, g)$ denote the probability of the diagram having
exactly $n$ arcs, $0\le n \le \lfloor \ell/2\rfloor$. In the following we
compute $\mathbb{P}_{d}(t=n|\ell, g)$.

Let $\delta_g(\ell)$ denote the number of diagrams of genus $g$ over $\ell$
vertices. Furthermore, let $\delta_g(\ell ,n)$ denote the number of diagrams
of genus $g$ over $\ell$ vertices having exactly $n$ arcs, $2n\le \ell$.
Then $\ell-2n$ vertices are unpaired and
$$
\delta_g(\ell, n) = {\ell \choose \ell-2n} \epsilon_g(n).
$$
Furthermore
\begin{equation}
\delta_g(\ell)=\sum_{n=0}^{\lfloor l/2\rfloor} \delta_g(\ell, n)
=\sum_{n=0}^{\lfloor l/2\rfloor} {\ell \choose \ell-2n} \epsilon_g(n).
\end{equation}
In order to generate a diagram of genus $g$ over $\ell$ arcs uniformly,
we need to solve
$$
\frac{1}{\delta_g(\ell)} = \mathbb{P}_{d}(t=n|\ell, g) \cdot \frac{1}{\epsilon_g(n)} \cdot
\frac{1}{{\ell \choose \ell-2n}},
$$
whence $\mathbb{P}_{d}(t=n|\ell, g)=\delta_g(\ell, n)/\delta_g(\ell)$.

We present the pseudocode of {\tt UniformDiagram} as
Algorithm~\ref{A:diagram}. The subroutine
{\tt NumberofArcs} returns $n$ with probability
$\mathbb{P}_{d}(t=n|\ell, g)$, which determines the number of arcs in diagram $D_g(\ell)$.
{\tt UnifomTree} is a standard process uniformly generating a matching of genus $0$ with
$n$ arcs. Finally, the process {\tt InsertUnpairedVertices}
first chooses $(\ell-2n)$ vertices from $\ell$ vertices as unpaired. It leaves
$2n$ vertices not selected, which are considered to be paired. Then the process
maps the $2n$ vertices of the matching generated by {\tt UniformMatching} and keeps
the arcs in the upper half-plane. Accordingly, a diagram of genus $g$ over $\ell$ vertices
with exactly $n$ arcs is generated. The result of some experiments conducted in connection with
the generation of random matchings and diagrams using our algorithms is shown in Figure~\ref{F:plot}.

%%%
%%%%%%%%%%%%%%%%%%%%%%%%%%%%%%%%%%%%%%%%%%%%%%%%%%%%%%%%%%%%%%%%%%%%%%%%%%
%%%
\begin{algorithm}
\begin{algorithmic}[1]
\STATE {\tt UniformDiagram}~($\ell, TargetGenus$)
\STATE {$n\leftarrow {\tt NumberofArcs}(\ell, g)$}
\STATE {$\mathfrak{m}_0 \leftarrow {\tt UnifomTree}(n)$}
\STATE {$\mathfrak{m}_g \leftarrow {\tt UniformMatching}(\mathfrak{m}, TargetGenus)$}
\STATE {$D_g \leftarrow {\tt InsertUnpairedVertices}(\mathfrak{m}_g, \ell)$}
\STATE \textbf{return} $D_g$
\caption {\small }
\label{A:diagram}
\end{algorithmic}
\end{algorithm}
%%%
%%%%%%%%%%%%%%%%%%%%%%%%%%%%%%%%%%%%%%%%%%%%%%%%%%%%%%%%%%%%%%%%%%%%%%%%%
%%%

%%%% A figure, showing boundary compoennt
\begin{figure}[ht]
\begin{center}
\includegraphics[width=0.8\columnwidth]{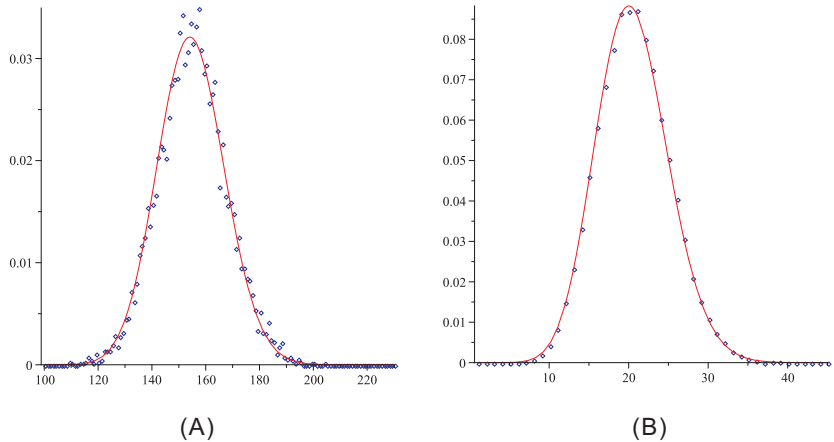}
\end{center}
\caption{\small Uniform generation: (A) matchings, $n=12$, $g=2$ and
$\epsilon_2(n/2)=6468$. We generate $N=10^6$ matchings and display the frequencies
of their multiplicities (blue dots) together with the binomial coefficient of the
uniform sampling
${N \choose \ell}(1/\epsilon_2(n/2))^\ell(1-1/\epsilon_2(n/2))^{N-\ell}$ (red).
(B) The analog of (A) for diagrams. Here we have $n=12$, $g=2$ and $\delta_2(n)=48741$.
We generate $N=10^6$ diagrams and display the frequencies of their multiplicities
(blue dots) together with the binomial coefficients
${N \choose \ell}(1/\delta_2(n))^\ell(1-1/\delta_2(n))^{N-\ell}$ (red).
}
\label{F:plot}
\end{figure}

\section{Non-uniform sampling}

RNA structures can be represented as diagrams and are, due to the biophysical
context subject to certain constraints with respect to their free energy \cite{Mathews:99}.
The latter energy is oftentimes modeled as a function of the loops of the
underlying RNA structure \cite{Mathews:99}, $S$. These loops are in fact equal
to the boundary components of the fatgraph constructed from the molecule.
In the following we shall discuss, $\eta(S)$, a simplified version of the actual
bio-physical loop-energy of a structure $S$.

Let us start with RNA secondary structures, that correspond to diagrams of genus $0$.
For a secondary structure $S_0$, we denote its corresponding (see Section~\ref{S:basic},
duality mapping $\pi$) unicellular map by
$\mathfrak{m}_0=\pi(S_0)$. The bonds or arcs of the structure then correspond to edges of
the unicellular map $\mathfrak{m}_0$ and loops or boundary components to vertices.
Three types of loops are distinguished: hairpin loops, interior loops
(including helices and bludge loops) and multi-loops.
Accordingly, the duality maps hairpin loops into vertices of degree one,
interior loops into vertices of degree two, and multi-loop into vertices of
degree greater than two, see Figure~\ref{F:loops}.
$\eta(S)$ extends these types in order to deal with structures having
arbitrary genus $g\ge 0$ as follows.

\begin{figure}[ht]
\begin{center}
\includegraphics[width=0.8\columnwidth]{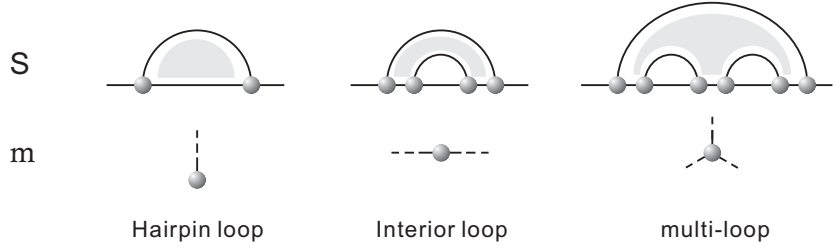}
\end{center}
\caption{\small Hairpin loops, interior loops and multi-loops in
secondary structures and their translation into unicellular maps.
}
\label{F:loops}
\end{figure}

Let $S_g$ denote an RNA structure having length $\ell$, $n$ arcs
and genus $g$, and $\mathfrak{m}_g=\pi(S_g)$ its corresponding
unicellular map. Then $\eta(S_g)$ is given by
\begin{equation}
\eta(S_g) = n\cdot b +\sum_{v\in \mathbb{V}} T(v) + L^{pk}_g.
\end{equation}
Here $b$ represents an energy contribution of arcs, $\mathbb{V}$ the set
of all vertices $\mathfrak{m}_g$, $T(v)$ is function given by
$$
T(v) =
\begin{cases}
L^{hp} \quad \text{if $d(v)=1$}, \\
L^{int} \quad \text{if $d(v)=2$},  \\
L^{mul} \quad \text{if $d(v)> 2$ and $v$ has no trisection,} \\
0 \quad \text{if $v$ is attached to the root,}
\end{cases}
$$
where $d(v)$ is the degree of vertex $v$, and $L^X$ is the contribution of a loop of
type $X$, where $X=\{hp, int, mul\}$.
Finally, $L^{pk}_g$ represents a contribution that stems from novel loop-types
emerging for genus $g>0$. In this model, we do not take contributions from unpaired
vertices into account.

In case of $g=1$, there are four different types of pseudoknots \cite{Huang:10a},
see Figure~\ref{F:reduction}. This is analogous for any genus: there are always
only finitely many corresponding shadows \cite{Huang:09,Huang:10a}, see Figure~\ref{F:reduction}.
Here, a shadow is a diagram without unpaired vertices in which all stacks (parallel arcs)
have size one.
Formally, a shadow of a structure can be obtained by first removing all its
unpaired vertices, second removing all noncrossing arcs (together with their vertices) and then replacing
a set of parallel arcs (and the incident vertices) of the form $\{(i,j), (i+1, j-1), \ldots, (i+\ell-1, j-\ell+1)\}$
by a single arc (and two vertices).

Let us have a closer look at the boundary components of these shadows in case of genus $1$.
(H) is inspected to have one boundary component, whence $\eta(S_\text{H})=2b+L^{mul}+L^{pk}_1$.
(K) and (L) have two boundary components and accordingly $\eta(S_\text{K})=\eta(S_\text{L})=3b+2L^{mul}+L^{pk}_1$.
Finally, for (M) we have $\eta(S_\text{M})=4b+3L^{mul}+L^{pk}_1$.

Consider a matching $S_1$ of genus $1$ having $n$ arcs and $\mathfrak{m}_1=\pi(S_1)$
the unicellular map given by the duality.
By selecting a trisection $\tau$ in $\mathfrak{m}_1$ and applying the mapping $\Xi$, we obtain
$\Xi(\mathfrak{m}_1, \tau)=(\mathfrak{m}_0, v_1, v_2, v_3)$
and three labeled vertices. Here we write $\mathfrak{m}_0^{(3)}=(\mathfrak{m}_0, v_1, v_2, v_3)$
for short. Let $S_0^{(3)}$ denote a
secondary structure with three labeled boundary component where
$\pi(S_0^{(3)})=\mathfrak{m}_0^{(3)}$. Let further
$\mathbb{S}_{0,n}^{(3)}$ and $\mathbb{S}_{1,n}$ denote the set of
$S_1$ and $S_0^{(3)}$ respectively.
By Lemma~\ref{L:trisection}, there are two trisections in $\mathfrak{m}_1$.
Therefore, by selecting the same $S_1$ and different trisection $\tau$,
$\Xi$ results in different $S_0^{(3)}$. Thus
we have the cardinality $2|\mathbb{S}_{1,n}| = |\mathbb{S}_{0,n}^{(3)}|$.
Figure~\ref{F:reduction} shows this for the four shadows of genus $1$
and their secondary structure with three labeled boundary components.

\begin{figure}[ht]
\begin{center}
\includegraphics[width=0.8\columnwidth]{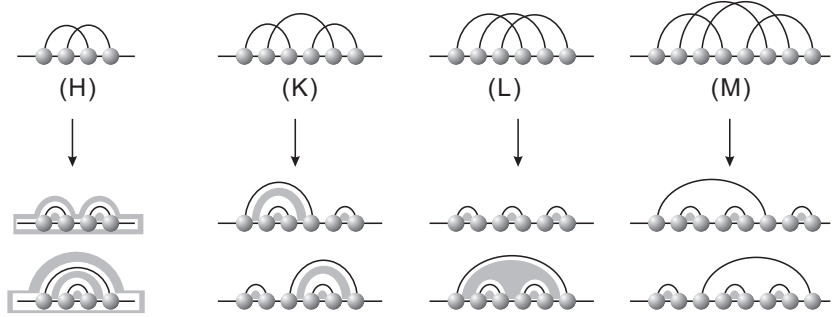}
\end{center}
\caption{\small Correspondence between shadows of genus one and
their labeled secondary structures.
}
\label{F:reduction}
\end{figure}

We next formulate an ``energy'' for structures $S_0^{(3)}$, $\eta(S_0^{(3)})$,  that matches the
energy $\eta$ for their corresponding counterpart of genus one after gluing.
Note that this allows us to reduce everything to secondary structures with three labeled boundary
components.
To this end, let $v_1$, $v_2$ and $v_3$ be three labeled vertices in $\mathfrak{m}_0^{(3)}$, where
$\mathfrak{m}_0^{(3)}=\pi(S_0^{(3)})$. Setting $T(v_1)=T(v_2)=T(v_3)=(L_1^{pk}+L^{mul})/3$ we observe

%%%
%%%%%%%%%%%%%%%%%%%%%%%%%%%%%%%%%%%%%%%%%%%%%%%%%%%%%%%%%%%%%%%%%%%%%%%%%%%%%%%%%%%%%
%%%
\begin{proposition} \label{P:score}
We have $\eta(S_1) = \eta(S_0^{(3)})$.
\end{proposition}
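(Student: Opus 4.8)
The plan is to transport both energies to the dual side via $\pi$ and to read off the effect of the genus-one gluing map. Write $\mathfrak{m}_1=\pi(S_1)$ and $\mathfrak{m}_0^{(3)}=(\mathfrak{m}_0,v_1,v_2,v_3)=\pi(S_0^{(3)})$, and recall that $S_1$ is obtained from $S_0^{(3)}$ by gluing the three labelled vertices $v_1,v_2,v_3$ of $\mathfrak{m}_0$ into the single vertex $\overline{v}$ of $\mathfrak{m}_1$ via $\Lambda$ (for $k=1$ this is just one application of $\Phi$). This raises the genus from $0$ to $1$ and equips $\mathfrak{m}_1$ with the trisection $\tau=\overline{\sigma}^{-1}(a_3)$ lying on $\overline{v}$. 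The decisive structural remark is that $\Lambda$ leaves the edge set untouched---so the arc count is still $n$ and the term $n\cdot b$ is common to both sides---and that every vertex of $\mathfrak{m}_0$ other than $v_1,v_2,v_3$ reappears in $\mathfrak{m}_1$ with an unchanged half-edge cycle, hence with the same degree.

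Granting that the loop type $L^{hp}/L^{int}/L^{mul}$ of a vertex is determined by its degree (which I justify below), the contributions $n\cdot b$ and $T(w)$ for every surviving vertex $w\neq v_1,v_2,v_3,\overline{v}$ coincide on the two sides. Subtracting this common part reduces the claimed identity $\eta(S_1)=\eta(S_0^{(3)})$ to the purely local balance
$$
T(v_1)+T(v_2)+T(v_3)+L^{pk}_0 \;=\; T(\overline{v})+L^{pk}_1 .
$$
By the definition of the modified score each labelled vertex contributes $\tfrac13(L_1^{pk}+L^{mul})$, so the left-hand side equals $L_1^{pk}+L^{mul}+L^{pk}_0$; since $L^{pk}_0=0$ (no pseudoknot loop arises at genus zero) and $L_1^{pk}=L^{pk}_1$, the whole proposition collapses to the single assertion $T(\overline{v})=L^{mul}$. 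This is precisely why the weights $\tfrac13(L_1^{pk}+L^{mul})$ were chosen: their sum simultaneously absorbs the genus-one penalty $L^{pk}_1$ and the one multi-loop that the glued vertex becomes.

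The heart of the argument, and the step I expect to need the most care, is therefore justifying $T(\overline{v})=L^{mul}$ together with the degree-determinacy used above. The degree bound is immediate: gluing concatenates the three half-edge cycles, so $d(\overline{v})=d(v_1)+d(v_2)+d(v_3)\ge 3$, ruling out the hairpin, interior, and root cases. The delicate point is that passing from $\mathfrak{m}_0$ to $\mathfrak{m}_1$ alters $\gamma=\alpha\circ\sigma$ and hence the global order $<_\gamma$ defining trisections; one must check both that $\overline{v}$ is nonetheless scored as a genuine multi-loop and that no \emph{other} surviving vertex has its loop type changed, so that the cancellation in the previous paragraph is legitimate. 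I would resolve this by reading the ``no trisection'' clause as a marker that singles out the pseudoknot vertex created by gluing---whose multi-loop value $L^{mul}$ is exactly what the weights anticipate---while the loop type of every vertex is fixed by its degree alone. Testing this convention against the four genus-one shadows of Figure~\ref{F:reduction}, whose energies run from $\eta(S_{\mathrm H})=2b+L^{mul}+L^{pk}_1$ to $\eta(S_{\mathrm M})=4b+3L^{mul}+L^{pk}_1$, confirms $T(\overline{v})=L^{mul}$ in each case; combined with the local balance this establishes $\eta(S_1)=\eta(S_0^{(3)})$.
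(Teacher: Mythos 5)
Your proof is correct and follows essentially the same route as the paper: dualize via $\pi$, observe that $\Lambda$ merges exactly $v_1,v_2,v_3$ into one vertex $\overline{v}$ of degree $\ge 3$ while leaving the arc count and all other vertices untouched, and check the local balance forced by the choice $T(v_i)=(L^{pk}_1+L^{mul})/3$. The only difference is bookkeeping --- the paper folds the genus-one penalty into the glued vertex, writing $T(\overline{v})=L^{mul}+L^{pk}_1$, whereas you keep $L^{pk}_1$ as the global term and reduce to $T(\overline{v})=L^{mul}$ --- and your explicit handling of the ``no trisection'' clause (reading the numerical value $L^{mul}$ as determined by degree alone, consistent with the paper's genus-one examples) addresses a point the paper's one-line proof glosses over.
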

%%%
%%%%%%%%%%%%%%%%%%%%%%%%%%%%%%%%%%%%%%%%%%%%%%%%%%%%%%%%%%%%%%%%%%%%%%%%%%%%%%%%%%%%%
%%%

\begin{proof}
The mapping $\Xi$ is a bijection and $\Lambda(\mathfrak{m}_0, v_1, v_2, v_3)=
(\mathfrak{m}_1,
\tau)$. The three labeled vertices in $\mathfrak{m}_0$ are glued as $\overline{v}$, where
$d(\overline{v})\ge 3$. Hence $T(\overline{v})=L^{mul}+L^{pk}_1=T(v_1)+T(v_2)+T(v_3)$, because
$T(v_1)=T(v_2)=T(v_3)=(L^{pk}_1+L^{mul})/3$. The other vertices in $\mathfrak{m}_0$ maintain
hence their scores are not changed.
\end{proof}

Given $\eta(S)$ we proceed along the lines of \cite{McCaskill:90} and construct
a probability space of structures by computing the partition function of a given
sequence. Let $\theta(n)=\sum_{S\in \mathbb{S}_{n}} e^{\eta(S)}$ denote the
total energy of all structures. A structure, $S$, is sampled with probability
$e^{\eta(S)}/\theta(n)$.
In case of secondary structures, loop-based and arc-based energy models are
compatible to the standard recursion of secondary structure \cite{Waterman:78a}
and $\theta_0(n)$ can be computed by the recursion
$$
\theta(n) = \sum_{i=1}^{n-2} \theta(i)\theta(n-i-1)
+ e^{{L^{hp}}+b} \cdot \theta(1) + e^{{L^{int}}+b} \cdot \theta(n-1)
+ e^{{L^{mul}}+b} \cdot \sum_{i=1}^{n-3} \theta(i)\theta(n-i-2),
$$
where $\eta$ is the energy functional, discussed above. As there is only one summation
in the above recursion, $\theta(n)$ is computed in $O(n^2)$ time.

We proceed by showing that the new functional, $\eta(S_0^{(3)})$, {is} also compatible
with the secondary structure recursions.

%%%
%%%%%%%%%%%%%%%%%%%%%%%%%%%%%%%%%%%%%%%%%%%%%%%%%%%%%%%%%%%%%%%%%%%%%%%%%%%%%%%%%%%%%
%%%
\begin{lemma}
Let $\theta_1(n)=\sum_{S\in \mathbb{S}_{1,n}} e^{\eta(S)}$ and
$\theta_0^{(3)}(n)=\sum_{S\in \mathbb{S}_{0,n}^{(3)}} e^{\eta(S)}$.
Then $\theta_1(n)=\theta_0^{(3)}(n)/2$ can be computed in $O(n^2)$ time.
Once $\theta_1(n)$ is computed, a structure of genus one, $S_1$, is sampled
with probability $e^{\eta(S_1)}/\theta_1(n)$ in $O(n)$ time.
\end{lemma}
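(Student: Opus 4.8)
The plan is to prove the three assertions in turn: the identity $\theta_1(n)=\theta_0^{(3)}(n)/2$, the $O(n^2)$ computability of $\theta_0^{(3)}(n)$, and the $O(n)$ sampling. First I would establish the identity directly from the structure theory already in place. By Lemma~\ref{L:trisection} every unicellular map $\mathfrak{m}_1=\pi(S_1)$ of genus one carries exactly two trisections, and by Theorem~\ref{T:bij} with $g=1$ the slicing map $\Xi$ is a bijection sending each pair $(\mathfrak{m}_1,\tau)$ to a triple $(\mathfrak{m}_0,v_1,v_2,v_3)=\pi(S_0^{(3)})$. Hence the fibre over each $S_1$ has size two, which is exactly the cardinality relation $|\mathbb{S}_{0,n}^{(3)}|=2|\mathbb{S}_{1,n}|$ noted before Proposition~\ref{P:score}. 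Since that proposition gives $\eta(S_1)=\eta(S_0^{(3)})$ for each of the two images, summing $e^{\eta}$ over $\mathbb{S}_{0,n}^{(3)}$ equals summing $2\,e^{\eta(S_1)}$ over $\mathbb{S}_{1,n}$, which yields $\theta_0^{(3)}(n)=2\,\theta_1(n)$.

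Next I would construct the recursion computing $\theta_0^{(3)}(n)$. The objects of $\mathbb{S}_{0,n}^{(3)}$ are ordinary secondary structures carrying three distinguished boundary components, and by the definition of $\eta(S_0^{(3)})$ each labeled vertex contributes the constant $(L_1^{pk}+L^{mul})/3$ in place of its degree-dependent loop score, while every other vertex keeps the standard scores $L^{hp}$, $L^{int}$, $L^{mul}$. The key observation is that labeling a boundary component is a decision local to a single loop, so it factors through the classical secondary-structure decomposition. I would therefore introduce the family $\theta^{(j)}(n)$, $j=0,1,2,3$, the partition function over secondary structures on $n$ arcs carrying $j$ labeled boundary components, with $\theta^{(0)}=\theta$ the ordinary partition function and $\theta^{(3)}=\theta_0^{(3)}$. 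Each loop-forming term of the standard recursion splits into an \emph{unlabeled} copy (weighted by the usual $L^{X}$) and a \emph{labeled} copy (weighted by $(L_1^{pk}+L^{mul})/3$ and consuming one unit of the label budget), while the branching term becomes a convolution in which the label budget, together with the choice of which labels go to each side, is distributed between the two factors. This produces a finite system of four arrays, each obeying a recursion with a single summation, so each is computed in $O(n^2)$ and the whole system in $O(n^2)$; by the identity, $\theta_1(n)=\theta^{(3)}(n)/2$ is then available in $O(n^2)$.

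For the sampling step I would use stochastic backtracking through this recursion, exactly as in \cite{McCaskill:90}: reading the precomputed arrays $\theta^{(j)}$ one generates a secondary structure with three labeled boundary components $S_0^{(3)}$ with probability $e^{\eta(S_0^{(3)})}/\theta_0^{(3)}(n)$ in a single $O(n)$ traceback. Applying the gluing map $\Lambda$ (followed by the Poincar\'e dual $\pi^{-1}$) to the resulting triple $(\mathfrak{m}_0,v_1,v_2,v_3)$ produces a genus-one structure $S_1$ at cost $O(n)$. To identify the output law, note that by paragraph one each $S_1$ has exactly two preimages in $\mathbb{S}_{0,n}^{(3)}$, both of energy $\eta(S_1)$ by Proposition~\ref{P:score}, so
$$
\mathbb{P}(S_1)=\sum_{\tau}\frac{e^{\eta(S_0^{(3)})}}{\theta_0^{(3)}(n)}=\frac{2\,e^{\eta(S_1)}}{2\,\theta_1(n)}=\frac{e^{\eta(S_1)}}{\theta_1(n)},
$$
as required, with total cost $O(n)$ per sample.

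The hard part is neither the identity, which is a counting consequence of Proposition~\ref{P:score} and Lemma~\ref{L:trisection}, nor the traceback, which is standard; it is the careful verification that the modified functional $\eta(S_0^{(3)})$ is genuinely \emph{compatible} with the secondary-structure recursion. Concretely, one must check that overriding the degree-dependent loop score of a labeled vertex by the degree-independent constant $(L_1^{pk}+L^{mul})/3$ can always be realized as a local, loop-by-loop choice within the standard decomposition, without creating nonlocal dependencies that would break the convolution or inflate the state space. Once this compatibility is secured, bounding the label budget by the constant $3$ keeps the augmented recursion at $O(n^2)$ and the traceback at $O(n)$, and the remaining work (the exact probability and the $O(n)$ gluing) is routine given the bijections of Theorem~\ref{T:bij}.
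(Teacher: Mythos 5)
Your proposal is correct and follows essentially the same route as the paper: the identity $\theta_1(n)=\theta_0^{(3)}(n)/2$ via the two-trisection count and Proposition~\ref{P:score}, the introduction of the auxiliary partition functions $\theta_0^{(j)}(n)$ for $j\le 3$ labeled boundary components with single-summation recursions distributing the label budget across the convolution, and stochastic traceback followed by gluing via $\Lambda$ for the $O(n)$ sampling. The only difference is one of explicitness — the paper writes out the full recursion for $\theta_0^{(3)}$ term by term, whereas you describe it schematically and flag the compatibility check as the remaining verification — but the underlying argument is identical.
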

%%%
%%%%%%%%%%%%%%%%%%%%%%%%%%%%%%%%%%%%%%%%%%%%%%%%%%%%%%%%%%%%%%%%%%%%%%%%%%%%%%%%%%%%%
%%%
\begin{proof}
We have $\eta(S_1)=\eta(S_0^{(3)})$ for all $S_1\in \mathbb{S}_{1,n}$ and
$S_0^{(3)}\in \mathbb{S}_{0,n}^{(3)}$, and $2|\mathbb{S}_{1,n}|=|\mathbb{S}_{0,n}^{(3)}|$.
Therefore,
$$
\theta_1(n)=\sum_{S\in \mathbb{S}_{1,n}} e^{\eta(S)} = \frac{1}{2}
\sum_{S\in \mathbb{S}_{0,n}^{(3)}} e^{\eta(S)} = \frac{1}{2} \theta_0^{(3)}(n).
$$
We next show that $\theta_0^{(3)}(n)$ can be computed in $O(n^2)$ time.
Let $\theta_0^{(2)}= \sum_{S\in \mathbb{S}_{0,n}^{(2)}} e^{\eta(S)}$ and
$\theta_0^{(1)}=\sum_{S\in \mathbb{S}_{0,n}^{(1)}} e^{\eta(S)}$, where
$\mathbb{S}_{0,n}^{(2)}$ and $\mathbb{S}_{0,n}^{(1)}$ denote the sets of
secondary structures with two and one labeled boundary components.
The functionals of these labeled boundary components are computed exactly
as in the case of $S_0^{(3)}$.

Then we have, see also Figure~\ref{F:theta_recursion}:
\begin{eqnarray*}
\theta_0^{(3)}(n)& =& 2\sum_{i=1}^{n-2}\theta_0^{(3)}(i)\theta_0(n-i-1) +
2\sum_{i=1}^{n-2}\theta_0^{(2)}(i)\theta_0^{(1)}(n-i-1) \\
& + & e^{(L^{mul}+L_1^{pk})/3+b}\cdot \left(2\sum_{i=1}^{n-2}\theta_0^{(2)}(i)\theta_0(n-i-1) +
\sum_{i=1}^{n-2}\theta_0^{(1)}(i)\theta_0^{(1)}(n-i-1) \right) \\
& + & 2e^{L^{mul}+b}\cdot \left( \sum_{i=1}^{n-3}\theta_0^{(3)}(i)\theta_0(n-i-2) +
\sum_{i=1}^{n-3}\theta_0^{(2)}(i)\theta_0^{(1)}(n-i-2) \right) \\
& + & e^{(L^{mul}+L_1^{pk})/3+b}\cdot \left(2\sum_{i=1}^{n-3}\theta_0^{(2)}(i)\theta_0(n-i-2) +
\sum_{i=1}^{n-3}\theta_0^{(1)}(i)\theta_0^{(1)}(n-i-2) \right) \\
& + & e^{L^{int}} \cdot  \theta_0^{(3)}(n-1) + e^{(L^{pk}_1+L^{mul})/3+b} \cdot  \theta_0^{(2)}(n-1).
\end{eqnarray*}

\begin{figure}[ht]
\begin{center}
\includegraphics[width=0.9\columnwidth]{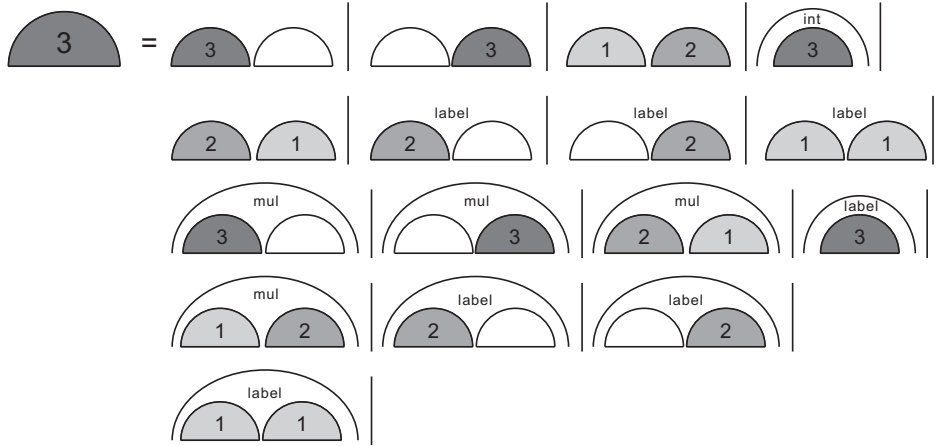}
\end{center}
\caption{\small The recursion for $\theta_0^{(3)}$.
}
\label{F:theta_recursion}
\end{figure}

Analogously, we have recursions for $\theta_0^{(2)}(n)$ and $\theta_0^{(1)}(n)$,
which can be computed in $O(n^2)$ time. Therefore, $\theta_1(n)=\theta_0^{(3)}(n)$
can be computed in $O(n^2)$ time.

In order to sample a diagram of genus $1$ over $\ell$ vertices, $D_1$, we
need first to determine its number of arcs. As in the case of uniform sampling,
we have $\vartheta_1(\ell)=\sum_{n=0}^{\lfloor \ell/2\rfloor} \vartheta_1(\ell,n)$,
where $\vartheta_1(\ell,n)={\ell \choose \ell-2n} \theta_1(n)$.
Replacing in the formulae for uniform sampling
$\epsilon_1(n)$ by $\theta_1(n)$ and $\delta_1(n)$ by $\vartheta_1(n)$, we find
that the probability of sampling a diagram with $n$ arcs is given by
$\vartheta_1(\ell, n)/\vartheta_1(\ell)$.
It remains to sample a matching $S_0^{(3)}$ with $n$ arcs and to subsequently
glue the three labeled vertices in $\mathfrak{m}_0^{(3)}=\pi (S_0^{(3)})$.
This generates a unicellular maps of genus one, $\mathfrak{m}_1$, which is
associated to $S_1=\pi^{-1}(\mathfrak{m}_1)$ by duality. Note that choosing different
slice-paths for $S_1$ generates two different $S_0^{(3)}$, see eq.~(\ref{E:commute}).
\begin{equation}\label{E:commute}
\diagram
S_0^{(3)} \dto \rto & S_1 \\
(\mathfrak{m}_0, v_1, v_2, v_3) \rto^{\Lambda} & (\mathfrak{m}_1,\tau) \uto
\enddiagram
\end{equation}
The probability of a structure of genus one, $S_1$, is then given by
$$
\frac{2 e^{\eta(S_0^{(3)})}}{\theta_0^{(3)}(n)} = \frac{2e^{\eta(S_1)}}{2\theta_1(n)}
= \frac{e^{\eta(S_1)}}{\theta_1(n)}.
$$
Finally, we insert the unpaired vertices into $S_1$ and obtain $D_1$
with the probability
$$
\frac{\vartheta_1(\ell, n)}{\vartheta_1(\ell)} \cdot
\frac {{\ell \choose \ell-2n}e^{\eta(D_1)}}{\theta_1(n)}
=\frac{e^{\eta(D_1)}}{\vartheta_1(\ell)}.
$$
\end{proof}

\section{Conclusion}

In this paper we have proposed an original and highly efficient (linear time) approach to sample random RNA
pseudoknotted structures in the uniform and a non-uniform model. The later builds on a simplified
concept of free energy, favoring foldings of a native appearance. This is a first step towards
efficient prediction algorithms for pseudoknotted RNA since structure predictions of good quality can easily be derived
from suitable (high quality) random samples (see \cite{NebelScheid:11} and the references given there).
To this end, our algorithms need to be extended towards two directions:
\begin{enumerate}
  \item The probability model needs to be improved further, and
  \item the RNA sequence needs to be taken into account.
\end{enumerate}

An immediate application of the uniform sampler are the distributions
of loops in structures of genus $g$. We have shown that the loops in
structures are translated  into vertices of their associated unicellular
maps. In particular, a hairpin loop corresponds to a vertex of degree one,
an interior loop to a vertex of degree two and a multi-loop is to some vertex
having degree greater than two without a trisection.
Finally a pseudoknot loop corresponds to a vertex having degree greater
than two containing a trisection. In Fig.~\ref{F:loop_statistic} we present the
respective data, filtered by genus.
\begin{figure}[ht]
\begin{center}
\includegraphics[width=0.9\columnwidth]{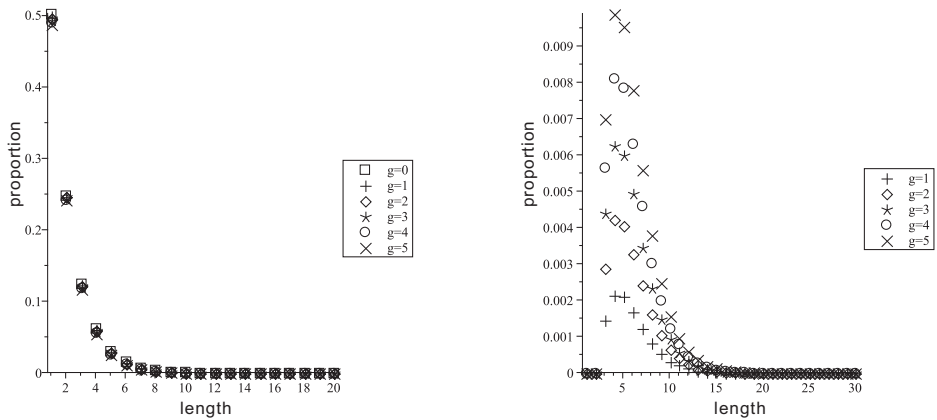}
\end{center}
\caption{\small Loops in uniformly generated, genus filtered, RNA structures.
Left: distribution of standard loops, where $x$-axis is the length
of boundary component and $y$-axis is frequency.
Right: distribution of pseudoknot loops.
}
\label{F:loop_statistic}
\end{figure}

It is well known in context of pseudoknot-free secondary structures how to use either a sophisticated
model for the free energy or stochastic concepts like the maximum likelihood approach to obtain
realistic probability models applicable to random sampling. Our approach seems to be suitable to
apply the latter and it is a topic for future research to work out the details. Incorporating the sequence
is a more complicated task but again results for classic RNA secondary structures prove it feasible with only
small losses in efficiency \cite{NebelScheid:12}.

Thus we assume our findings of this paper an important contribution towards
the development of efficient structure prediction tools for pseudoknotted RNA structures. Those are also in need
for state of the art tools addressing the inverse folding problem. The latter quite often use some search heuristic
(like e.g.\ a genetic algorithm) to process the space of possible sequences using structure prediction tools to
judge the quality (similarity to input) of current solutions. For the large number of calls, the efficiency of the
prediction algorithm is crucial for the applicability of the entire approach. Today's established algorithms
for the prediction of pseudoknotted RNA with run times in $O(n^4)$ or worth (see \cite{Nebel:12}) seem not to be appropriate.

\bibliographystyle{elsarticle-num}
\bibliography{gen}

\end{document}